\documentclass{wscpaperproc}
\usepackage{latexsym}
\usepackage{changes}
\usepackage{comment}
\newif\ifshowmoved
\showmovedtrue
\ifshowmoved
  \specialcomment{movedout}{\par\begingroup\color{gray}}{\endgroup\par}
\else
  \excludecomment{movedout}
\fi
\newif\ifshowdeleted
\showdeletedtrue

\ifshowdeleted
  \specialcomment{deletedblock}{\par\begingroup\color{gray}}{\endgroup\par}
\else
  \excludecomment{deletedblock}
\fi
\usepackage{graphicx}
\usepackage{mathptmx}
\usepackage[T1]{fontenc}

\usepackage{amsmath}
\usepackage{amsfonts}
\usepackage{amssymb}
\usepackage{amsbsy}
\usepackage{amsthm}
\usepackage{mathtools}
\usepackage{bm}

\usepackage{multirow}
\usepackage{algorithmic}
\usepackage{algorithm}
\usepackage{booktabs}
\usepackage[inline]{enumitem}
\newtheoremstyle{wsc}% hnamei
{3pt}% hSpace abovei
{3pt}% hSpace belowi
{}% hBody fonti
{}% hIndent amounti1
{\bf}% hTheorem head fontbf
{}% hPunctuation after theorem headi
{.5em}% hSpace after theorem headi2
{}% hTheorem head spec (can be left empty, meaning `normal')i

\theoremstyle{wsc}
\newtheorem{theorem}{Theorem}

\newtheorem{proposition}[theorem]{Proposition}

\newtheorem{remark}{Remark}
\newtheorem{problem}{Problem}

\newcommand{\erdosrenyi}{Erd\H{o}s-R\'{e}nyi}
\newcommand{\er}{\text{ER}}
\newcommand{\sbm}{\text{SBM}}
\newcommand{\icmodel}{\text{IC}}
\newcommand{\reals}{\mathbb{R}}

\newcommand{\obs}{\mathcal{O}}
\newcommand{\obssamp}{\obs_\text{samp}}
\newcommand{\pr}{\text{Pr}}
\newcommand{\scn}{\mathcal{S}}
\newcommand{\inter}{\mathcal{I}}
\newcommand{\cset}{\mathcal{C}}
\newcommand{\gset}{\mathcal{G}}
\newcommand{\ic}{\text{IC}}
\newcommand{\diffusion}{\mathcal{F}}

\newcommand{\scnset}{{S}}
\newcommand{\cas}{C}
\newcommand{\module}{\mathcal{M}}
\newcommand{\prob}{\textsc{ScenarioID}}

\newcommand{\gnnmodel}{\mathcal{P}}
\newcommand{\gnnnumlayers}{\mathcal{T}}
\newcommand{\gnnmessage}{\phi}
\newcommand{\gnnaggregate}{\bigoplus}
\newcommand{\gnnupdate}{\psi}
\newcommand{\gnnreadout}{\beta}
\newcommand{\gnnsample}{\gamma}
\newcommand{\hiddenembs}{\mathbf{H}}

\newcommand{\nodeembs}{\mathbf{N}}
\newcommand{\classhead}{\mathcal{H}}
\newcommand{\graphembs}{{M}}
\newcommand{\dataset}{\mathcal{D}}

\newcommand{\noedge}{no-edge}
\newcommand{\epicurveandstructure}{\textsc{FeatEng}}

\newcommand{\popdata}{TN/VA}
\newcommand{\sbmstudy}{\textit{SBM-2comm}}
\newcommand{\sbmthreecomm}{\textit{SBM-3comm}}

\newcommand{\pt}{p_t}
\newcommand{\nc}{{n_C}}
\newcommand{\mc}{{m_C}}
\newcommand{\dc}{{\delta_C}}

\newcommand{\trainfunction}{{\tt TrainModel}}
\newcommand{\highlevelfunction}{{\tt HigherLevelTraining}}
\newcommand{\classheadout}{{Q}}

\newcommand{\communitymatch}{community-match}
\newcommand{\communityid}{community-id}

\newif\ifextended
\extendedtrue

\newcommand{\apptext}[2]{\ifextended #1\else #2\fi}

\usepackage[pdftex,colorlinks=true,urlcolor=blue,citecolor=black,anchorcolor=black,linkcolor=black]{hyperref}

\makeatletter
\renewcommand{\@seccntformat}[1]{\csname the#1\endcsname\hspace{0.5em}}
\makeatother
\def\subsubsection#1{\savesubsub{#1}}

\begin{document}

% setting up general page style
\pagestyle{fancyplain}

% setting up page style of first page
\thispagestyle{plain}
\firstPageHead{}

% setting up running header (authors) of subsequent pages
\chead{\fancyplain{}{\itshape Alabsi Aljundi, Harrison, Chen, Adiga, Vullikanti, and Marathe}}

% setting up seperation parameters
%\headsep=72pt
\rhead{}
\cfoot{}
\renewcommand{\headrulewidth}{0pt} % (renewcommand needed in fancyhdr to remove top decorative line)
%\headrulewidth=0pt  % ("setlength" needed in fancyheading to remove top decorative line)

%%%%%%%%%%%%%%%%%%%%%%%%%%%%%%%%%%%%%%%%%%%%%%%%%%%%%%%%%%%%%%%%%%%%%%%%%%%%%%
%                                                                            %
%     THESE COMMANDS ARE REQUIRED TO WORK WITH WSC.BST TO MAKE BIBLIO     %
%                                                                            %
%%%%%%%%%%%%%%%%%%%%%%%%%%%%%%%%%%%%%%%%%%%%%%%%%%%%%%%%%%%%%%%%%%%%%%%%%%%%%%
\makeatletter
\let\@internalcite\cite
\def\cite{\def\@citeseppen{-1000}%
    \def\@cite##1##2{(##1\if@tempswa , ##2\fi)}%
    \def\citeauthoryear##1##2##3{##1 ##3}\@internalcite}
\def\citeNP{\def\@citeseppen{-1000}%
    \def\@cite##1##2{##1\if@tempswa , ##2\fi}%
    \def\citeauthoryear##1##2##3{##1 ##3}\@internalcite}
\def\citeN{\def\@citeseppen{-1000}%
%  Pierre L'Ecuyer's fix for multiple cite bug
%  Added by Paul J Sanchez on 4 October 2001
%   \def\@cite##1##2{##1\if@tempswa , ##2)\else{)}\fi}%
%   \def\citeauthoryear##1##2##3{##1 (##3}\@citedata}
    \def\@cite##1##2{##1\if@tempswa, ##2)\else{}\fi}%
    \def\citeauthoryear##1##2##3{##1 (##3)}\@citedata}
\def\citeA{\def\@citeseppen{-1000}%
    \def\@cite##1##2{(##1\if@tempswa , ##2\fi)}%
    \def\citeauthoryear##1##2##3{##1}\@internalcite}
\def\citeANP{\def\@citeseppen{-1000}%
    \def\@cite##1##2{##1\if@tempswa , ##2\fi}%
    \def\citeauthoryear##1##2##3{##1}\@internalcite}
\def\shortcite{\def\@citeseppen{-1000}%
    \def\@cite##1##2{(##1\if@tempswa , ##2\fi)}%
    \def\citeauthoryear##1##2##3{##2 ##3}\@internalcite}
\def\shortciteNP{\def\@citeseppen{-1000}%
    \def\@cite##1##2{##1\if@tempswa , ##2\fi}%
    \def\citeauthoryear##1##2##3{##2 ##3}\@internalcite}
\def\shortciteN{\def\@citeseppen{-1000}%
%  Pierre L'Ecuyer's fix for multiple cite bug
%  Added by Paul J Sanchez on 2 September 2002
%  should have caught this last year...
%   \def\@cite##1##2{##1\if@tempswa , ##2)\else{)}\fi}%
%   \def\citeauthoryear##1##2##3{##2 (##3}\@citedata}
% Shane G. Henderson fix for extra right bracket at end of optional material June 8, 2005
%    \def\@cite##1##2{##1\if@tempswa, ##2)\else{}\fi}%
    \def\@cite##1##2{##1\if@tempswa, ##2\else{}\fi}%
    \def\citeauthoryear##1##2##3{##2 (##3)}\@citedata}
\def\shortciteA{\def\@citeseppen{-1000}%
    \def\@cite##1##2{(##1\if@tempswa , ##2\fi)}%
    \def\citeauthoryear##1##2##3{##2}\@internalcite}
\def\shortciteANP{\def\@citeseppen{-1000}%
    \def\@cite##1##2{##1\if@tempswa , ##2\fi}%
    \def\citeauthoryear##1##2##3{##2}\@internalcite}
\def\citeyear{\def\@citeseppen{-1000}%
    \def\@cite##1##2{(##1\if@tempswa , ##2\fi)}%
    \def\citeauthoryear##1##2##3{##3}\@citedata}
\def\citeyearNP{\def\@citeseppen{-1000}%
    \def\@cite##1##2{##1\if@tempswa , ##2\fi}%
    \def\citeauthoryear##1##2##3{##3}\@citedata}
%
% \@citedata and \@citedatax:
%
% Place commas in-between citations in the same \citeyear, \citeyearNP,
% \citeN, or \shortciteN command.
% Use something like \citeN{ref1,ref2,ref3} and \citeN{ref4} for a list.
%
\def\@citedata{%
    \@ifnextchar [{\@tempswatrue\@citedatax}%
                  {\@tempswafalse\@citedatax[]}%
}

\def\@citedatax[#1]#2{%
\if@filesw\immediate\write\@auxout{\string\citation{#2}}\fi%
  \def\@citea{}\@cite{\@for\@citeb:=#2\do%
    {\@citea\def\@citea{, }\@ifundefined% by Young
       {b@\@citeb}{{\bf ?}%
       \@warning{Citation `\@citeb' on page \thepage \space undefined}}%
{\csname b@\@citeb\endcsname}}}{#1}}%

% don't box citations, separate with ; and a space
% also, make the penalty between citations negative: a good place to break.
%
\def\@citex[#1]#2{%
\if@filesw\immediate\write\@auxout{\string\citation{#2}}\fi%
  \def\@citea{}\@cite{\@for\@citeb:=#2\do%
    {\@citea\def\@citea{; }\@ifundefined% by Young
       {b@\@citeb}{{\bf ?}%
       \@warning{Citation `\@citeb' on page \thepage \space undefined}}%
{\csname b@\@citeb\endcsname}}}{#1}}%

% (from apalike.sty)
% No labels in the bibliography.
%
\def\@biblabel#1{}
\makeatother

%\newlength{\bibhang}
%\setlength{\bibhang}{2em}

% Indent second and subsequent lines of bibliographic entries. Taken
% from openbib.sty: \newblock is set to {}.
% \renewcommand{\refname}{REFERENCES}

\newdimen\bibindent
\bibindent=0.0em
% SEC: was \def\thebibliography#1{\section*{\refname\@mkboth
% SEC: was   {\uppercase{\refname}}{\uppercase{\refname}}}\list
\def\thebibliography#1{\section*{\refname}\list
   {}{\settowidth\labelwidth{[#1]}
   \leftmargin\parindent
   \itemindent -\parindent
   \listparindent \itemindent
   \itemsep 0pt
   \parsep 0pt}
   \def\newblock{}
   \sloppy
   \sfcode`\.=1000\relax}

           % Set up BiBTeX macros

% needed to make the tex document look more like the word counterpart :-(
\setlength{\baselineskip}{12.7pt}

\title{BOUNDARY DEGREE AS A NODE-LEVEL FEATURE FOR EPIDEMIC SCENARIO IDENTIFICATION IN AGENT-BASED CASCADE SIMULATIONS}

\author{\begin{center}Amro Alabsi Aljundi\textsuperscript{1}, Galen Harrison\textsuperscript{1}, Jiangzhuo Chen\textsuperscript{1}, Abhijin Adiga\textsuperscript{1},\\
Anil Kumar Vullikanti\textsuperscript{1}, and Madhav V. Marathe\textsuperscript{1}\\
[11pt]
\textsuperscript{1}Biocomplexity Institute, University of Virginia, Charlottesville, VA, USA\end{center}
}

\maketitle

\vspace{-12pt}

\section*{ABSTRACT}
Characterizing the scenario underlying an epidemic from its disease cascade is
an important task in simulation analytics. We propose boundary degree, the count
of an infected node's contacts in the contact network that were not
infected, as a per-node cascade feature for this task. Through systematic
ablation on realistic contact networks of Tennessee and Virginia, we show
that boundary degree alone improves scenario identification accuracy by 19\%.
Edge labels, whose importance was observed empirically by prior work,
consistently improve accuracy across settings; we provide theoretical
grounding for this observation. These effects are complementary. We prove that
certain epidemic scenarios are indistinguishable without boundary or edge
information. Prior feature-engineering approaches included aggregate boundary
statistics, but not among the top-ranked features; our per-node representation
reveals their importance clearly. Our results suggest
that contact tracing applications should track contacts with non-infected
individuals, not only transmissions.

\section{INTRODUCTION}

High-resolution agent-based models are increasingly used to study complex
epidemic scenarios involving multiple interventions and heterogeneous
populations~\shortcite{hoops2021high,chen2025epihiper}. These
simulations produce detailed cascade graphs (who-infected-who networks)
capturing individual-level transmission events with rich node attributes (e.g.,
age, occupation) and edge attributes (e.g., interaction
type)~\shortcite{verelst2016behavioural}. Analyzing these cascades to
characterize the underlying epidemic scenario is a key task in simulation
analytics, with implications for understanding disease dynamics and evaluating
interventions~\shortcite{bedson2021review}. In practice, cascade data is
obtained through manual and digital contact
tracing~\shortcite{ferretti2020quantifying}. However,
due to limited adoption and privacy constraints, only a fraction of a cascade is
typically observed~\shortcite{blom2021barriers}. This motivates studying
cascades under partial observation settings.

Most approaches to characterizing epidemic scenarios operate on the epidemic
curve (infections per
day)~\shortcite{nsoesie2011prediction,Nsoesie_Leman_Marathe_2014,Augusta_Deardon_Taylor_2019,Pokharel_Deardon_2014}.
This discards the relational structure of cascades. Machine learning approaches
are being increasingly combined with agent-based simulations for epidemiological
tasks~\shortcite{ye2025integrating}. \shortciteN{harrison2023identifying}
advanced the state of the art by formulating the \prob{} problem: classifying
partially observed cascades into their generating scenarios using hand-crafted
aggregated structural features such as labeled path counts and degree
distributions. In prior work~\shortcite{aljundi2023network}, we developed a
graph analytics framework for cascade simulation ensembles using similar
aggregate features. While effective, these aggregate approaches require
substantial domain knowledge to design features, and, critically, they flatten
node- and edge-level signals into global statistics, potentially obscuring
fine-grained patterns that distinguish scenarios.

In this work, we propose the \textit{boundary degree}, the number of an infected
node's contacts in the underlying network that were not infected, as a per-node
cascade feature for the \prob{} problem. Through systematic ablation using a
Graph Neural Network~\shortcite{Jegelka_2022} (GNN) as the evaluation
classifier, we show that boundary degree alone improves scenario identification
accuracy, on average, by 19\% on realistic social contact networks.
Additionally, we provide theoretical grounding for the empirical observation
of~\shortciteN{harrison2023identifying} that edge labels are highly informative:
edge labels consistently improve accuracy across all settings, and their
effect is complementary to boundary degree. This suggests that the boundary
region between infected and non-infected populations carries much of the signal
that distinguishes epidemic scenarios. Although \shortciteN{harrison2023identifying} included aggregate boundary
statistics in their feature set, these were not ranked among their most
important features; the per-node representation we propose reveals their
importance clearly.
This result has a practical implication: contact tracing applications should
incorporate tracking of contacts with non-infected individuals, not only
transmission events.

\noindent \textbf{Summary of contributions.}
\begin{itemize}
  \item \textbf{Boundary degree as a cascade metric.} We propose boundary degree
    as a per-node feature for cascade analysis and demonstrate both
    empirically (+19\% accuracy on realistic networks) and theoretically that
    it captures scenario-discriminating information that aggregate methods miss.

  \item \textbf{Theoretical grounding for feature importance.} We prove that
    certain scenario pairs are indistinguishable without boundary or edge
    information, backing both our findings and the observation of
    \shortciteN{harrison2023identifying} that edge labels matter.

  \item \textbf{Systematic evaluation.} We validate these features through
    ablation experiments on two realistic social contact networks (Tennessee
    and Virginia) and on stochastic block model networks, outperforming the
    feature engineering approach of the state-of-the-art \prob{} algorithm.

  \item \textbf{Robustness and generalizability.} Our approach achieves over 90\%
    accuracy at $T=70$ with only 20\% coverage and transfers across networks
    with minimal performance loss.
\end{itemize}

The source code is available at
\href{https://gitlab.com/AmroAlJundi/cascade_classification_gnn.git}{gitlab.com/AmroAlJundi/cascade\_classification\_gnn}

\section{RELATED WORKS}

\textbf{Scenario identification.} Mathematical and mechanistic models are
effective methods for understanding disease spread
properties~\shortcite{Kermack_McKendrick_1927}. One approach for learning
the model parameters that fit an observed epidemic spread is through supervised
classification of an epidemic spread into a set of hypothesized model parameter
sets~\shortcite{nsoesie2011prediction,Augusta_Deardon_Taylor_2019,Pokharel_Deardon_2014,Nsoesie_Leman_Marathe_2014}. In this approach,
simulations are carried out using these sets of model parameters and then
subsequently used to train a classifier. The classifier is used to classify an
observed cascade. Early work using this approach targeted epidemic
curves~\shortcite{nsoesie2011prediction,Augusta_Deardon_Taylor_2019,Pokharel_Deardon_2014,Nsoesie_Leman_Marathe_2014}.
\shortciteN{nsoesie2011prediction} learn to classify epidemic curves into different
classes using classical machine learning methods such as RF and SVM, as well as
ensemble mechanisms. \shortciteN{Pokharel_Deardon_2014} extend their work by
considering spatially stratified epidemic curve
data. \shortciteN{Nsoesie_Leman_Marathe_2014} propose using Dirichlet processes for
this task, and show that this approach can identify curves that are not part of
the initial hypothesized parameter sets. \shortciteN{Augusta_Deardon_Taylor_2019} use
neural networks for the classification of epidemic curves, both static (MLP) and
sequence-based models. Finally, due to the wide adoption of contact tracing
data, Harrison et al.~\shortcite{harrison2023identifying} proposed porting the scenario
identification problem to cascades rather than epidemic curves. They considered
learning epidemic characteristics in an adversarial setting under full and
partial observation settings. Multiple scenarios differing in disease model
parameters and interventions are constructed in such a way that these scenarios
have very similar infection counts in the initial stages. Features such as node
and edge attributes, structural features of the cascades like size, degrees,
motif counts, etc. are used in the learning process. Various learning algorithms
are applied and evaluated.

\textbf{Boundary degrees as node features.} Given the spread of a disease in a
network, we term the number of contacts that an infected node has with nodes
that are not infected its \textit{boundary degree}. In this work we show
empirically and theoretically that this per-node statistic captures
scenario-discriminating information that other per-node statistics miss.
Quantities on the infected--uninfected interface appear throughout diffusion
modeling, but, to our knowledge, not in the form we use here. In ODE models of spreading dynamics the
interface is a population-level quantity: the effective-degree models of
\shortciteN{lindquistEffectiveDegreeNetwork2011} do track the
susceptible-neighbor count of an infected node, our boundary degree, but only as
an aggregate count of such nodes in a system of ODEs for SIS and SIR spread.
\shortciteN{gleesonHighAccuracyApproximationBinaryState2011} likewise track the
number of \textit{S-I edges} between infected and non-infected nodes, for the
binary-state SIS model. In diffusion source identification,
\shortciteN{prakashSpottingCulpritsEpidemics2012} use the \textit{attack degree},
a per-node count of how many infected neighbors an uninfected node has, to score
and exonerate candidate nodes when localizing an epidemic's sources.
\shortciteN{chengCanCascadesBe2014} use the number of \textit{border edges} of a
cascade, an aggregate count, as a single feature for predicting cascade growth.

Within scenario identification, the only prior use of boundary degree is the
earlier work~\shortcite{harrison2023identifying}, which computed it as an
aggregate histogram over uninfected boundary nodes. We instead use it as a
per-node feature, defined on infected nodes so that it lives on the cascade
nodes the GNN consumes. We show in Section~\ref{sec:learnability} that boundary
degree carries information the cascade structure alone does not: there are
scenario pairs no learner can separate from unlabeled cascades that become
separable once each node carries its boundary degree. Among the per-node
features we test, boundary degree is by far the most discriminative for scenario
identification (Section~\ref{sec:results}).

\textbf{Integrated modeling approaches in epidemiology.}
Learning approaches are being increasingly used in conjunction with agent-based
simulations for various tasks in epidemiological modeling. A recent survey by
\shortciteN{ye2025integrating} reviews such works. Applications include
forecasting, calibration, parameterization, intervention assessment, and
outbreak detection.
\shortciteN{chopra2023differentiable} propose GradABM, a fast
differentiable design for ABMs where agents are represented as tensors and use a
continuous relaxation of the stochastic epidemiological model.

\textbf{GNNs and network propagation.} GNNs have become effective tools for
studying computational problems related to propagation processes.
~\shortciteN{chopra2023differentiable} use a tensorized representation of agent
states for policy evaluation. Another class of cascade representation learning aims at forecasting
the future size of a cascade. Works in this direction
include~\shortcite{Huang_Wang_Zhang_2019} which is a GNN architecture for
learning representations of cascades.

\section{NETWORK REPRESENTATION OF EPIDEMIC CASCADES}\label{sec:preliminaries}
\textbf{Notation.} We use $[k]$ where $k\in \mathbb{N}_{>0}$ to define the
ordered list $[1, 2, \dots, k]$. Matrices are capital roman letters
(e.g.\ $\mathbf{W}$), with $\mathbf{W}_{i,j}$ the entry in row $i$, column $j$. Let~$G=\big(V_G, E_G, \phi_{V_G},
\phi_{E_G}\big)$ be an attributed graph with node set~$V_G$, edge set~$E_G$,
node attributes~$\phi_{V_G}:V_G\rightarrow\reals^{k_v}$, and edge
attributes~$\phi_{E_G}:E_G\rightarrow\reals^{k_e}$. Let~$\gset$ denote a set of
graphs. We use two random graph families for theoretical analysis and empirical
validation: \erdosrenyi~(ER) graphs~\shortcite{erdos1959random}, denoted
$\er(n,p)$, where each pair of $n$ nodes is connected independently with
probability~$p$; and Stochastic Block Model~(SBM)
graphs~\shortcite{HOLLAND1983109}, denoted $\sbm(n,\pi_k,\mathbf{W})$, where
nodes are partitioned into $k$ blocks according to the probabilities in
vector~$\pi_k$ and $\mathbf{W}_{i,j}$ is the edge probability between blocks
$i$ and $j$.  

\textbf{Contact networks.} In the context of computational
epidemiology, a contact network~$G=(V, E)$ represents the interaction structure
of a population. Each node~$v\in V$ corresponds to an individual and is
associated with demographic attributes (e.g., age, gender, occupation). Each
edge~$(u,v)\in E$ represents an interaction between two individuals and is
typically labeled with contextual information such as the activity during which
the interaction occurred (e.g., home, work, school) and its duration. In this
work, we use high-resolution contact networks derived from digital twins of real
populations, constructed by synthesizing multiple data sources including
demographic data, activity patterns, and built-environment
data~\shortcite{chen2025epihiper,hoops2021high}. The resulting graphs span
millions of nodes. We distinguish the underlying contact network from the
cascade graph, which records the interactions on which transmission succeeded.

\textbf{Network propagation model.} The spread of a contagion through a contact
network is simulated as a discrete-time stochastic process in which nodes
transition between health states. We formalize this as a diffusion model
$\diffusion = (\sigma, \theta, \tau)$, where $\sigma$ is the set of states (or
compartments) a node can occupy, $\theta$ is a set of state transition
functions, and $\tau$ is the node state initialization strategy. We use the
\textit{Susceptible}-\textit{Exposed}-\textit{Infected}-\textit{Recovered}
(SEIR) diffusion model, though our method is generalizable to other compartment
models~\shortcite{marathe2013computational}. At each time step~$t$, an
infectious node~$u$ may infect a susceptible neighbor~$v$ along an edge~$e$ with
probability determined by edge-level properties (such as contact duration and
location type) and the disease transmissibility. Following the infection, $v$
transitions to the exposed state ($E$), then to infectious ($I$) after a dwell
time, and eventually to recovered ($R$). State transitions are modeled as a
Gillespie-style stochastic process~\shortcite{chen2025epihiper,hoops2021high}.
For experiments on realistic populations, we use the SEIR parameterization of
\shortciteN{harrison2023identifying}, simulated with the EPIHIPER
engine~\shortcite{chen2025epihiper}. For theoretical analysis and experiments on
random graphs, we use the independent cascade
(IC)~\shortcite{kempe2003maximizing} model, a special case of SEIR with three
states ($S$, $I$, $R$), in which a node infected at time~$t$ is infectious at
$t+1$ and recovers at $t+2$.

\textbf{Interventions.} Interventions modify the network to prevent or mitigate
disease spread. We consider \emph{vaccination} (reducing transmission
probability on edges incident to vaccinated nodes) and \emph{generic social
distancing} (removing non-essential edges), denoting an intervention model
by~$\inter(\theta)$ with parameters~$\theta$.

\textbf{Epidemic scenarios.} As defined by \shortciteN{harrison2023identifying},
an epidemic scenario is a tuple~$\scn(\gset,\diffusion,\inter)$ that specifies a
network family, a diffusion model, and an intervention model. In some cases, we
consider a fixed network, i.e.,~$\gset=\{G\}$, in which case, we will simplify
the notation as follows:~$\scn(G,\diffusion,\inter)$. 

\textbf{Cascade graphs.} Given a scenario~$\scn(\gset,\diffusion,\inter)$, a
single simulation run produces a \emph{who-infected-who} graph called the
cascade graph~\shortcite{newman2003structure,harrison2023identifying}. The
cascade $\cas(V_\cas, E_\cas, \phi_{V_\cas},\phi_{E_\cas})$ has node
set~$V_\cas\subseteq V_G$ containing all infected nodes and edge
set~$E_\cas\subseteq E_G$ containing only the edges on which successful
transmission occurred. Node and edge attributes are inherited from the
underlying contact network. Because the diffusion process is stochastic,
repeated simulation under the same scenario produces an ensemble of cascades.
The set of all possible cascades on graph set~$\gset$ is denoted
by~$\cset_\gset$.

\textbf{Cascade features.}\label{sec:node_rep} Nodes and edges in a cascade
inherit attributes from the underlying contact network. We categorize the
features available for each node into three types:
\begin{enumerate*}[label=(\roman*)]
  \item \textit{inherited features}: node attributes carried over from the
        contact network, such as an individual's age, race, or gender;
  \item \textit{cascade-aggregated features}: features computed from the cascade
        graph itself, such as a node's \textit{cascade degree} (number of edges
        incident on the node in the cascade); and
  \item \textit{contact network-aggregated features}: features that combine
        information from both the cascade and the underlying contact network,
        such as a node's \textit{graph degree} (total number of edges incident
        to the node in the contact network) and its \textit{boundary degree}
        (defined below).
\end{enumerate*} Edge labels follow a similar taxonomy. In our experiments, we use activity
labels inherited from the contact network (e.g., home, work, school),
represented as one-hot encodings.

\textbf{Boundary degree.} For a node $u$ in cascade $C=(V_C, E_C)$ on contact
network $G=(V_G, E_G)$, the boundary degree is defined as: \[\beta_u = |\{v :
v\notin V_C,\ (u,v) \in E_G\}|\] That is, $\beta_u$ counts the number of $u$'s
contacts in the underlying network that were \textit{not} infected. In this
work, we demonstrate theoretically and empirically the high signal embedded in
this feature.

\textbf{Partial observation model.}  We model partial information of cascades
using the following model from \shortciteN{harrison2023identifying}.
For~$0<\kappa\le1$, let~$V_G'\subseteq V_G$ be a set of~$\kappa\cdot n$ nodes
sampled uniformly. Given a cascade~$\cas$, $\obssamp(\cas,\kappa)$ is the
subgraph of~$\cas$ induced on the node set~$V_G'\cap V_\cas$.

We formally define the scenario identification problem as defined in
\shortciteN{harrison2023identifying}. 
\begin{problem}[\prob]
  Suppose we are given a set of epidemic scenarios $\scnset = \{\scn_1, \scn_2,
  \cdots, \scn_k\}$ defined on graph set $\gset$, a partial observation
  model~$\obs(\cdot)$, and a collection of labeled (partially observed)
  attributed cascades~$\dataset=\{(\obs(\cas_i),\ell_i)\mid i\in[N],\ \cas_i\in
  \cset_\gset,\ \ell_i\in[k]\}$, where~$\ell_i=j$ implies that~$\cas_i$ was
  generated by scenario~$j$. The objective is to find a
  function~$f:\cset_\gset\rightarrow [k]$ that, given an unlabeled
  cascade~$\cas\in\cset_\gset$, predicts the scenario in $\scnset$ that
  generated it.
\end{problem}

\begin{figure*}
  \centering \includegraphics[width=0.7\textwidth]{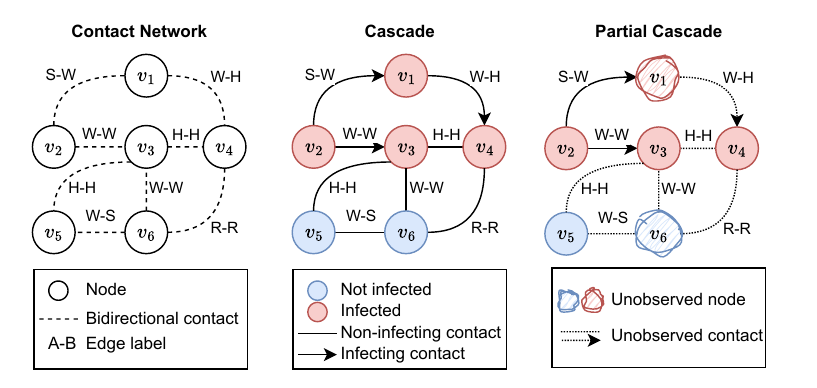}
  \caption{A contact network (left), a cascade generated over it (center),
    and the same cascade under partial observation with $\kappa\approx 0.3$
    (right). Partial observation omits nodes and their edges from the contact
    network. Node $v_3$ has boundary $\beta_3 = 2$ under the fully observed
    cascade (center).
    Under partial observation (right), $v_6$ is unobserved while $v_5$ remains
    observed, but the edge $(v_3, v_5)$ is unobserved, so $\beta_3 =
    0$.}\label{fig:example}
\end{figure*}
Figure~\ref{fig:example} illustrates a contact network and a cascade generated
over it.

\section{EXPERIMENTAL DESIGN} \label{sec:eval} We conducted two sets of
experiments, one on realistic contact networks of human population and the other
one on a set of stochastic block model networks. We describe their experimental
setup below.

\paragraph{Synthetic population datasets (\popdata)} We use realistic contact
networks of two US states, Tennessee~(TN) and
Virginia~(VA)~\shortcite{virginiadata}, which have been applied in multiple
studies~\shortcite{chen2025epihiper,harrison2023identifying,hoops2021high}.
This network data is derived from a US-scale digital twin, which in turn is
obtained by fusing diverse data sets such as census data (providing nodes with
attributes), land use data, activity patterns, building maps, etc. We construct
two datasets of cascade graphs generated through SEIR disease spread simulations
on the synthetic contact graphs. Structural properties of these networks appear
in Table~\ref{tab:graphs}. In contact graph~$G(V,E)$, nodes are individuals and
edges are interactions between individuals. Nodes have numerical (age) and
categorical (gender, race, designation, age group) attributes. Every edge is
attributed with two labels that represent the activity that the nodes on either
end of the edge were engaged in at the time of the interaction. Activities
include: ``home'', ``work'', ``shop'', ``school'', ``college'', ``religion'',
and ``other''. Mathematically, the label of edge $(u,v)$ is the concatenation of
two one-hot-encoded vectors of length 7, each representing the activity of nodes
$u$ and $v$ at the time of interaction.

\begin{table}[h!] \caption{Structural information about the synthetic contact
  networks from Harrison et al. used here. Both networks have a diameter of
length 14.} \label{tab:graphs} \centering \begin{tabular}{c|cccc} Network & |V|
  & |E|        & Max degree & Average degree \\ \hline TN      & 6,041,517 &
62,149,441 & 461    & 20.6   \\ VA      & 7,602,717 & 83,162,927 & 543    & 21.9
\end{tabular} \end{table}

We use four scenarios to generate cascades as defined in
\shortcite{harrison2023identifying}. Scenarios vary by disease transmissibility
and two intervention parameters. The vaccination intervention
$\inter_{vax}(\sigma_{vax}, \alpha)$ reduces the probability of infection of
$|V|\cdot\sigma_{vax}$ randomly selected nodes by $\alpha = 80\%$. The generic
social distancing intervention $\inter_{gsd}(\sigma_{gsd})$ removes, for
$|V|\cdot\sigma_{gsd}$ randomly selected (compliant) nodes, all incident edges
in which either endpoint is engaged in a non-essential activity (e.g.\
``shop''). Parameters were chosen to ensure that the cascades generated under
different scenarios are not easily distinguishable. Per-scenario parameters
appear in Table~\ref{tab:design}.

\begin{table}[h!] \caption{Disease and intervention parameters for different
  scenarios. $^*$Scenarios on the VA network used $\tau = 0.155$.
$^\dag$Scenario on the VA network used $\sigma_{gsd} = 60\%$.}
\label{tab:design} \centering \begin{tabular}{ c|ccc } \textbf{Scenario} &
  $\tau$ & $\sigma_{vax}$ & $\sigma_{gsd}$\\ \hline No vax + Low GSD & 0.09 &
  None & 25\%\\ No vax + High GSD & 0.09 & None & 70\%\\ Vax + Low GSD &
  $0.16^*$ & 50\% & 25\%\\ Vax + High GSD & $0.16^*$ & 50\% & $65\%^\dag$\\
\end{tabular} \smallskip \end{table}

To generate a cascade, the simulator applies interventions, randomly infects 20
nodes, and allows the disease to propagate for 300 days. For each contact
network, we generate 100 cascades per scenario. We create three datasets for
each contact network, each with a different time horizon (i.e. a time point at
which infections are no longer recorded). We use time horizons
$T\in\{30,50,70\}$ as it was shown that for this setting, scenarios are less
distinguishable earlier in the simulation~\shortcite{harrison2023identifying}.
Cascades are generated using the EPIHIPER epidemic
simulator~\shortcite{chen2025epihiper}. \paragraph{Stochastic Block Model (SBM)
datasets.}\label{sec:eval:sbm} We also experiment on SBM random graphs with
community-labeled edges and IC diffusion. Nodes are attributed with boundary,
cascade, and graph degrees as defined in Section~\ref{sec:node_rep}. An edge is
attributed by two concatenated vectors, each being the one-hot encoding of the
communities of the nodes at its ends.  We define a scenario $\scn=(\sbm(n,
\pi_k, \mathbf{W}), \icmodel_{\sbm}(\mathbf{T}, \nu))$  by its SBM parameters as
described in Section~\ref{sec:preliminaries} as well as its diffusion model
parameters $\mathbf{T} \in [0,1]^{k \times k}$ and $\nu$ which, combined, define
transition probabilities. To elaborate, given these parameters, an infected node
$u$ in community $i$ infects a susceptible node $v$ in community $j$ with
probability $\mathbf{T}_{i,j} \cdot \nu$. We use two datasets:
\begin{enumerate}
  \item \textbf{\sbmstudy{}} is a set of 1128 scenario pairs that represent a
        wide range of cross-scenario structural variety. All scenario pairs have
        $n=2000$, equisized communities, and symmetric $\mathbf{W}$ and
        $\mathbf{T}$ matrices. We sweep across the following parameters
        $\mathbf{W}_{i,j\neq i} \in \{0.005, 0.001\}$, $\mathbf{W}_{i,i} \in
        \{0.031, 0.053, 0.073\}$, $\mathbf{T}_{i, j\neq i}\in \{0.5, 0.3,
        0.1\}$, $\mathbf{T}_{i, i} = 1$, and $\nu \in \{0.029, 0.047, 0.071\}$.
        For each scenario pair, we run 1000 simulations. We prune scenario pairs
        based on two criteria: \begin{enumerate*}
          \item the number of infected nodes is less than 5\% of the total nodes
            in the graph, and
        \item one scenario's mean infection count exceeds $6\times$ the
            other's.
        \end{enumerate*}

      \item \textbf{\sbmthreecomm{}}, is a set of scenario pairs designed so
        that cascades are indistinguishable without community membership
        information. Each cascade contains three communities, and for scenario
        pair $(\scn_1, \scn_2)$ we guarantee that $\pr(\cas \in \scn_1) =
        \pr(\cas' \in \scn_2)$, where $\cas'$ is equivalent to $\cas$ but with
        the first and second communities' nodes swapping communities. Scenario
        pairs vary across the separability parameter $s\in [0.5, 1]$ which
        controls how similar inter-community transmission probabilities are
        between a scenario pair; when $s=0.5$ both $\scn_1$ and $\scn_2$ have
        identical transmission matrices; when $s=1.0$, the transmission
        probabilities $1 \leftrightarrow 3$ and $2 \leftrightarrow 3$ are the
        exact opposite across the scenario pair. All scenarios use $n=3000$
        nodes in three equally sized communities and share the symmetric SBM
        matrix $\mathbf{W}$ below; given global transmissibility
        $\nu=1$, $s\in[0.5,1]$, and $\gamma=1-s$, the transition matrices
        $\mathbf{T_1},\mathbf{T_2}$ and $\mathbf{W}$ are

        \begin{equation*}
          \mathbf{T_1}=\begin{bmatrix}1&0.5\gamma&0.5s\\0.5\gamma&1&0.5\gamma\\0.5s&0.5\gamma&1\end{bmatrix},\;
          \mathbf{T_2}=\begin{bmatrix}1&0.5\gamma&0.5\gamma\\0.5\gamma&1&0.5s\\0.5\gamma&0.5s&1\end{bmatrix},\;
          \mathbf{W}=\begin{bmatrix}0.031&0.005&0.005\\0.005&0.031&0.005\\0.005&0.005&0.031\end{bmatrix},
        \end{equation*} where the off-diagonal (inter-community) entries are
        scaled by $0.5$. 
    \end{enumerate}

Further details of the SBM graph generation, diffusion parameters, and data
representation are given in \apptext{Appendix~\ref{app:eval:sbm}}{Appendix C.2
in the extended version~\shortcite{aljundi2026boundarydegreenodelevelfeature}}.

\paragraph{Graph classifier} To evaluate the importance of individual cascade
features through ablation, we use a Graph Neural Network (GNN) as the
classification model~\shortcite{Jegelka_2022}, implemented using the GraphGym
framework~\shortcite{you2020design}. The GNN follows the Message Passing Neural
Network (MPNN) formulation~\shortcite{mpnn}: over $L$ layers,
each node aggregates information from its neighbors and their incident edges to
update its hidden representation. After the final layer, node representations
are pooled to produce a single cascade-level vector, which is classified by a
feed-forward head. The GNN takes cascade features (as defined in
Section~\ref{sec:node_rep}) as input without requiring hand-crafted aggregation
into global statistics. We use a standard MPNN architecture without
architectural optimizations to isolate the effect of input features from model
design choices. Architectural details and hyperparameters are given in
\apptext{Appendix~\ref{app:gnn}}{Appendix B in the extended
version~\shortcite{aljundi2026boundarydegreenodelevelfeature}}. 

\paragraph{Baseline} We use the model proposed by
\shortciteN{harrison2023identifying} (\epicurveandstructure{}) as our baseline.
It uses hand-crafted structural features of cascades (labeled path counts,
degree distributions) classified with logistic regression, random forest, and
SVM.

\section{RESULTS}\label{sec:results}
We use classification accuracy to evaluate which cascade features are most
informative for scenario identification. We first identify boundary degree and
edge labels as the dominant signals through systematic ablation. We then
validate these findings by comparing against a feature engineering baseline, and
test robustness under partial observation and generalizability across networks.

We use a 75\%/25\% train-evaluation split; since the classes are balanced, we
report accuracy, averaged over 5 runs with different random splits.

\subsection{Feature Importance Analysis}
A key advantage of GNNs is that they operate directly on per-node and per-edge
features, giving us the ability to evaluate individual feature contributions
through ablation. We systematically test which node features and edge labels
carry the most discriminative signal for scenario identification.

\subsubsection{Boundary Degree}
We measure the contribution of the aggregated degree features (graph degree,
cascade degree, and boundary degree) by training the GNN with and without each
feature. Figure~\ref{fig:edge_counts}.a shows the per-feature ablation on
\popdata{}; Figure~\ref{fig:sbm}.b shows boundary degree's effect on \sbmstudy{}.

\begin{figure*}[t]
    \centering
    \begin{minipage}[t]{0.48\textwidth}
        \centering
        \includegraphics[width=\linewidth]{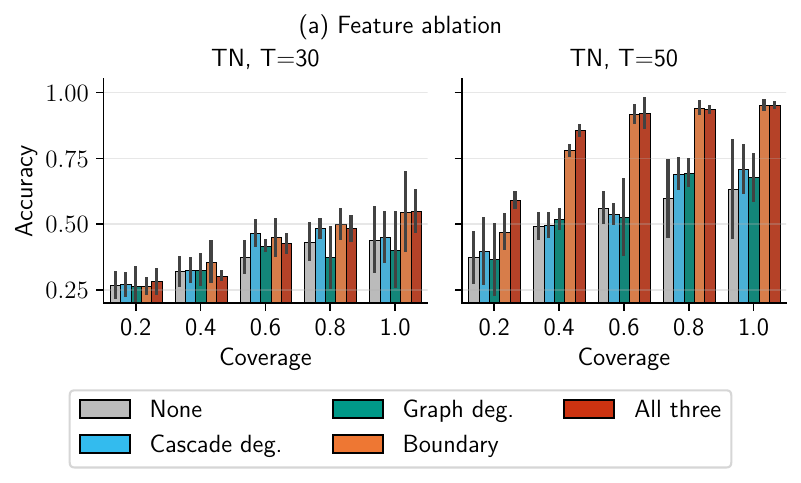}
    \end{minipage}%
    \hfill
    \begin{minipage}[t]{0.48\textwidth}
        \centering
        \includegraphics[width=\linewidth]{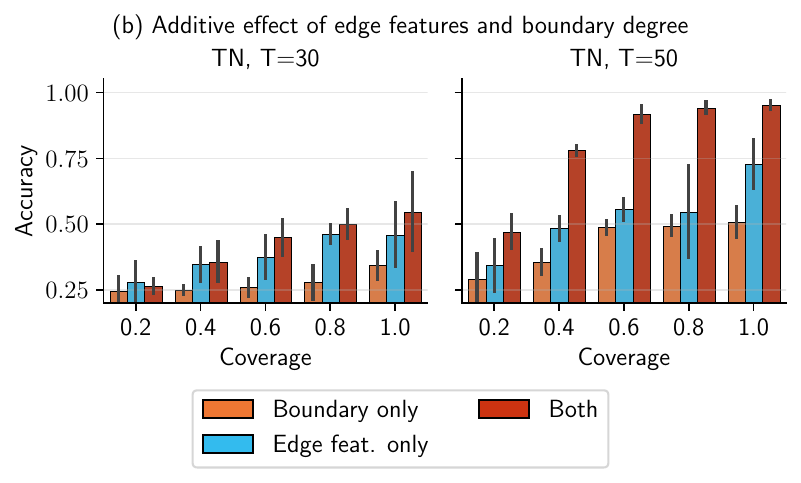}
    \end{minipage}
    \caption{Feature importance analysis on TN at $T \in \{30, 50\}$.
	    (a) GNN accuracy with each aggregated node feature added individually.
    (b) complementarity of boundary degree and edge labels.}
    \label{fig:edge_counts}
\end{figure*}

\begin{figure*}[b]
    \centering \includegraphics[width=0.72\textwidth]{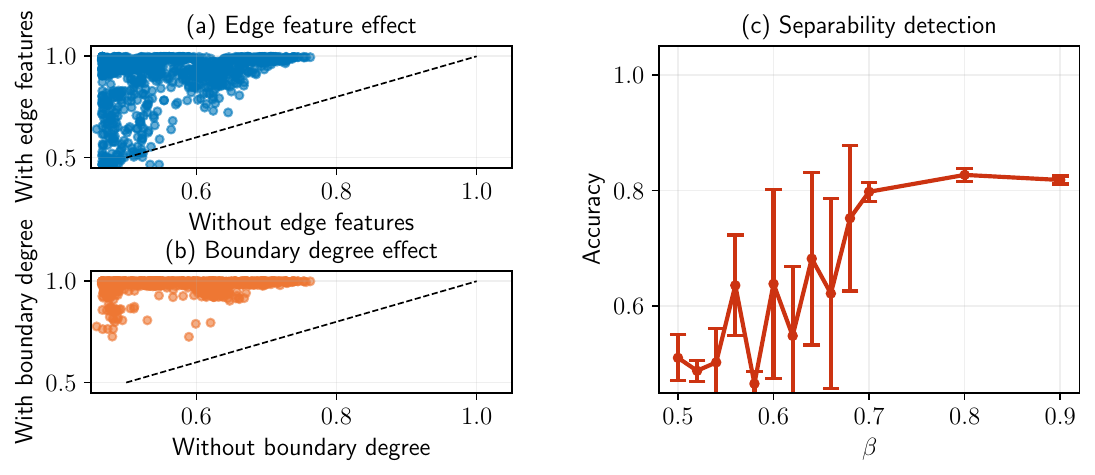}
    \caption{(a)~Effect of edge labels on \sbmstudy{}: each
    point is a scenario pair, with accuracy without edge labels on the
    $x$-axis and with edge labels on the $y$-axis. (b)~Effect of boundary
    degree on the same dataset. In both (a) and (b), points above the dashed
    $x=y$ line indicate an improvement. (c)~Separability
    detection on \sbmthreecomm{}: accuracy increases as the separability
    parameter $s$ grows, confirming that the GNN can exploit edge labels to
    distinguish scenarios that are provably indistinguishable without them.}
    \label{fig:sbm}
\end{figure*}

The results are clear: boundary degree is the dominant feature. It alone
improves accuracy by an average of 17\% on TN and 22\% on VA across all time
horizons and coverages (19\% overall, Wilcoxon signed-rank $p < 0.001$,
improvement in 151 out of 160 conditions). Adding cascade and graph degrees on
top contributes only 1--2\% additional accuracy. On \sbmstudy{}, boundary degree
improves accuracy, on average, by 42\%. We observe a similar effect on 
alternative GNN architectures including one with stochastic edge
sampling~\shortcite{you2020design}, attention
mechanisms~\shortcite{brody2021attentive}, and neighborhood
sampling~\shortcite{hamilton2017inductive}, confirming that the boundary degree
effect is not architecture-specific.

This finding is significant because \shortciteN{harrison2023identifying}
included aggregate boundary statistics (binned boundary degree histograms) in
their feature set, but their SHAP analysis found labeled path counts to be the
dominant group, with boundary statistics not among the top-ranked features. The
GNN, which processes boundary degree at the per-node level distributed across
the cascade, reveals its importance clearly. This confirms our theoretical
result at the end of Section~\ref{sec:learnability}: boundary degree captures
information that is otherwise invisible to methods operating on unlabeled
cascades.

\subsubsection{Edge Labels}
Throughout our experiments, edge labels proved to be of significant importance
for \prob{}. This result corroborates the results
of~\shortcite{harrison2023identifying} where labeled path counts were among the
most important features for classification quality. Edge labels consistently
improve accuracy across all settings. On the \sbmstudy{} dataset, edge labels
improve accuracy for every one of the 1128 scenario pairs
(Figure~\ref{fig:sbm}.a). We also validate edge label importance on
\sbmthreecomm{}, where scenario pairs are designed so that cascades are
indistinguishable without community membership information encoded in edge
labels. Recall from Section~\ref{sec:eval} that the separability parameter
$s\in[0.5,1]$ controls how differently the two scenarios connect community 3 to
the rest of the graph: at $s=0.5$ the two scenarios are identical, and at $s=1$
they are fully separable. We find that, as $s$ increases, accuracy increases
(Figure~\ref{fig:sbm}.c). Finally, we find the effect of edge labels and
boundary degrees to be complementary, as shown in
Figure~\ref{fig:edge_counts}.b.
\subsection{Classification Performance}
Having established which features matter, we now validate that a GNN using these
features outperforms the feature engineering baseline of Harrison et
al.~\shortcite{harrison2023identifying}. 
\begin{figure}
    \centering \includegraphics[width=0.9\linewidth]{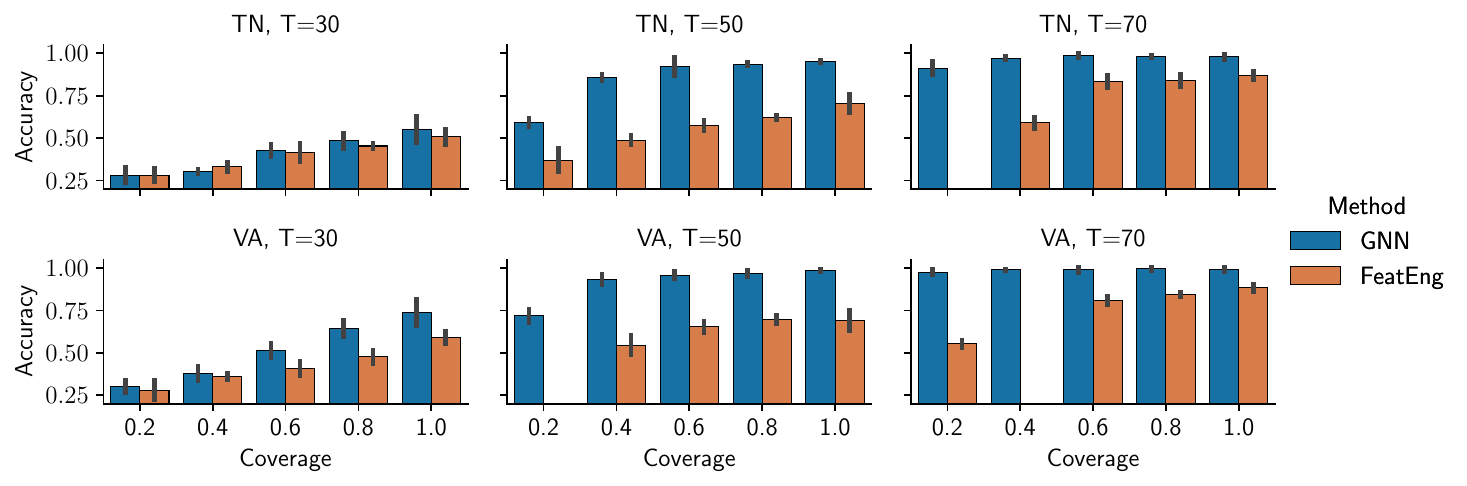}
    \caption{Comparing the classification accuracy on \popdata{} over $T\in\{30,
    50, 70\}$ and $\kappa\in\{0.2, 0.4, 0.6, 0.8, 1.0\}$ using the GNN model and
    \epicurveandstructure{}.} \label{fig:gnn_vs_es}
\end{figure}

Figure~\ref{fig:gnn_vs_es} shows the GNN and \epicurveandstructure{} at
different time horizons and coverage percentages. The GNN outperforms
\epicurveandstructure{} in 25 out of 27 matched conditions (Wilcoxon
signed-rank, $p < 0.001$), with a mean accuracy improvement of 17.6 percentage
points. At $T=50$ and $T=70$, the GNN under $40\%$ and $20\%$ coverage
outperforms \epicurveandstructure{} at full coverage. The results also
demonstrate robustness to partial observation: the GNN classifies scenarios at
$T=70$ with more than $90\%$ accuracy using only $20\%$ of the data, and
outperforms \epicurveandstructure{} at every coverage level even at $T=30$.

\subsection{Generalizability Across Networks}
We evaluate whether the learned representations transfer across contact
networks. A model trained on TN and evaluated on VA (with matching time horizon
and coverage) is, on average, only 6\%, 5\%, and 1\% less accurate than a model
trained directly on VA at times 30, 50, and 70, respectively. Additional
transfer learning experiments and fine-tuning results can be found in
\apptext{Appendix~\ref{app:eval:transfer}}{Appendix D.1 of~\shortcite{aljundi2026boundarydegreenodelevelfeature}}.

\section{LIMITS ON LEARNABILITY}\label{sec:learnability}
We develop theoretical limits on learnability. We exhibit pairs of scenarios
that cannot be distinguished from their cascades alone, under uncertainty in
either the diffusion parameters or the underlying network, and show how node
features such as boundary degree restore distinguishability.

Recall that a scenario~$\scn(\gset,\diffusion,\inter)$ is defined by the graph
class~$\gset$, diffusion model~$\diffusion$, and an intervention model~$\inter$.
Depending on the setting, we will consider a fixed graph~$G$ or a class of
graphs~$\gset$.  All the results in this section correspond to the IC model~(see
Section~\ref{sec:preliminaries} for definition). Let~$G$ be the underlying
graph. We consider a homogeneous setting where the transmission probability on
all edges of the graph is~$\pt$. We will assume that every cascade is rooted at
node~$r$~(the initial condition for each scenario), which is known to the
observer. Let $\diffusion=\ic(\pt,r)$ denote the IC diffusion model with
transmission probability~$\pt$ seeded at node~$r$. We consider edge~($\inter_e$)
and node interventions~($\inter_v$). Note that for a cascade~$\cas$ to occur,
transmission has to occur exactly on the edges of the cascade~$E_C$.  This
implies that other edges should have failed to transmit. Let~$\dc=|\{(u,v)\in
E_G\setminus E_C\mid u\in V_C,~v\in V_G\}|$ denote the number of
\emph{failed-transmission} edges: edges of the contact network with at least one
endpoint in the cascade~$\cas$ that are not part of~$E_C$, i.e., contacts of
infected nodes on which transmission did not occur. The probability of
generating a cascade~$\cas$ with specific node set~$V_C$ (containing root~$r$)
and edge set~$E_C$ with~$\mc$ edges is as follows:
\begin{align}
\label{eqn:prob_cascade} \pr(\cas)=\pt^\mc (1-\pt)^{\dc}\,.
\end{align}
Note: To obtain Equation~\ref{eqn:prob_cascade}, we are relaxing the IC model
constraint that an infected node can only infect susceptible nodes. We allow for
a node infected at time $t$ to be infected by a neighbor infected at the same time
step. But regardless of whether this event happens or not, this node will
recover at time step $t + 1$. This allows for transmission between nodes which are
at the same distance from the root node in the cascade, but does not alter the
evolution of the system state in any way.

First, we consider a simple setting where the graph~$G$ is fixed and an
edge-based quarantining is in place: two end points of any edge~$(u,v)$ decide
to not interact with one another with probability~$p_i$ independent of other
edges. Let this intervention model be denoted as~$\inter_e(p_i)$.

\begin{proposition}
\label{thm:edge_intervention} Given two scenarios
$\scn_1\big(G,\ic(\pt,r),\inter_e(p_i)\big)$ and
$\scn_2\big(G,\ic(\pt',r),\inter_e(p_i')\big)$ such that~$\pt\cdot
(1-p_i)=\pt'\cdot (1-p_i')$, and any set of training
samples~$\big\{(\cas,\ell)\mid \cas\in\cset,~\ell\in\{1,2\}\big\}$, no learner
can solve the $\prob(\scn_1,\scn_2)$.
\end{proposition}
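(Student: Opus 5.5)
The plan is to show that the two scenarios induce exactly the same probability distribution over cascades. Once that holds, the label is independent of the observed cascade under a uniform prior on $\{\scn_1,\scn_2\}$. Every classifier $f$, whatever training set it was built from, then has expected accuracy exactly $1/2$ (or, more generally, the accuracy of guessing from the prior alone). That is what "no learner can solve $\prob(\scn_1,\scn_2)$" means.

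The key step is to compose the intervention with the diffusion. Under $\inter_e(p_i)$, each edge of $G$ is independently retained with probability $1-p_i$, and independent cascade transmission then succeeds on a retained edge with probability $\pt$. An edge therefore transmits iff it is retained and the transmission coin succeeds. These events are independent across edges, and their product has probability $q:=\pt(1-p_i)$. Equivalently, $\scn_1$ is the same process as $\ic(q,r)$ on $G$ with no intervention, and $\scn_2$ is $\ic(q',r)$ with $q'=\pt'(1-p_i')$.

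This can be made rigorous in the edge-coin (live-edge) formulation used to derive Equation~\ref{eqn:prob_cascade}. Each edge carries one independent Bernoulli variable $X_e$, equal to $1$ if the edge is present and transmits. Under the relaxation in the note after Equation~\ref{eqn:prob_cascade}, the cascade is a deterministic function of $(X_e)_{e\in E_G}$ and $r$. So the law of the cascade depends only on the law of the vector $(X_e)$, which is i.i.d.\ Bernoulli($q$). Applying Equation~\ref{eqn:prob_cascade} with $q$ in place of $\pt$ gives $\pr_{\scn_1}(\cas)=q^{\mc}(1-q)^{\dc}=\pr_{\scn_2}(\cas)$ for every $\cas\in\cset$, using the hypothesis $q=q'$.

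The same equality holds for the partially observed cascades $\obs(\cas)$, and for any features computed from $\cas$ and $G$ such as boundary degree, because these are measurable functions of $\cas$ (plus independent sampling). Equal input distributions then give equal output distributions under any deterministic or randomized classifier, so the error is $1/2$ for balanced classes.

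The one point needing care is independence: the quarantine decision on an edge must be independent of the transmission coins and of the other edges. The statement grants this, but I would state it explicitly when composing. A second detail is that an edge considered from both endpoints, under the relaxation, should still use a single coin $X_e$. Handling this consistently is what makes the product form valid.
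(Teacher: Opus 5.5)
Your proposal is correct and follows the paper's own proof. The paper likewise composes the quarantine with transmission to get the per-edge probability $\beta=(1-p_i)\pt=(1-p_i')\pt'$, plugs it into Equation~\ref{eqn:prob_cascade} to get $\pr(\cas)=\beta^{\mc}(1-\beta)^{\dc}$ under both scenarios, and concludes $\Pr(\scn_1\mid\cas)=\Pr(\scn_2\mid\cas)=\tfrac12$. Your live-edge formalization (one independent coin per edge, with quarantine independent of transmission) only makes explicit what the paper assumes implicitly. Your aside about boundary degree holds because $\beta_u$ is computed on the fixed, pre-intervention edge set $E_G$; computed on the post-quarantine graph, it would not be a function of $\cas$ alone.
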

A similar argument handles network uncertainty, where $G\in\gset(n,p_e)$ is an
\erdosrenyi{} graph with edge probability~$p_e$.

\begin{proposition}
\label{thm:network_uncertainty}
Given two scenarios $\scn_1\big(\gset(n,p_e),\ic(\pt,r)\big)$ and
$\scn_2\big(\gset(n,p_e'),\ic(\pt',r)\big)$ such that~$\pt\cdot p_e=\pt'\cdot
p_e'$, and any set of training samples~$\big\{(\cas,\ell)\mid
\cas\in\cset,~\ell\in\{1,2\}\big\}$, no learner can solve the
$\prob(\scn_1,\scn_2)$.
\end{proposition}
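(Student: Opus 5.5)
The plan is to show that the distribution over observed (unlabeled) cascades is identical under $\scn_1$ and $\scn_2$. Once the two scenarios induce the same law on $\cset_\gset$, any learner $f$, whatever training samples it sees, satisfies $\pr_{\scn_1}(f(\cas)=1)=\pr_{\scn_2}(f(\cas)=1)$. With balanced classes its accuracy is therefore exactly $1/2$, so no learner solves $\prob(\scn_1,\scn_2)$. This mirrors the argument one would give for Proposition~\ref{thm:edge_intervention}, with random edge removal replaced by random graph generation.

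The core step is a coupling. In $\scn_1$, draw $G\sim\er(n,p_e)$ and run $\ic(\pt,r)$ on it. An edge $(u,v)$ ``transmits'' exactly when it is present in $G$ and the transmission coin succeeds, which happens with probability $p_e\pt$, independently across the $\binom{n}{2}$ vertex pairs. I will use the standard live-edge view of IC: reveal for every pair both whether the edge exists and whether it would transmit. The resulting cascade $\cas$ is then a deterministic function (BFS from $r$, with the paper's relaxation on same-level edges) of the set of live pairs, i.e.\ pairs that are both present and transmitting. Because the cascade records only successful transmissions, and $V_\cas, E_\cas$ contain no information about failed or absent pairs beyond the fact that they are not live, the cascade depends on $(G,\text{coins})$ only through the live-pair set.

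The live-pair set is a random graph $\er(n,p_e\pt)$ in $\scn_1$ and $\er(n,p_e'\pt')$ in $\scn_2$, and these are equal by hypothesis. Hence the cascade distributions coincide. Equivalently, in the spirit of Equation~\eqref{eqn:prob_cascade}, summing over the possible graphs gives
\[
\pr(\cas)=(p_e\pt)^{\mc}\,(1-p_e\pt)^{\dc'},
\]
where $\dc'$ counts the vertex pairs with an endpoint in $V_\cas$ that are not in $E_\cas$. The right-hand side depends on $(p_e,\pt)$ only through the product.

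The main obstacle is justifying that the map from live pairs to the observed cascade does not leak extra information. In particular, the node labels are fixed and known, timing is determined by distance from $r$, and non-live pairs inside $V_\cas$ are unobserved. I would handle this via the explicit product formula above, checking that each pair incident to $V_\cas$ contributes either the factor $p_e\pt$ (if it is in $E_\cas$) or $1-p_e\pt$ (otherwise), and that pairs outside $V_\cas$ contribute total mass $1$.
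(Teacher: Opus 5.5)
Your proposal is correct and follows essentially the same route as the paper, which also notes that each vertex pair transmits with effective probability $\beta = p_e p_t$ and writes $\Pr(C)=\beta^{m_C}(1-\beta)^{n_C n - m_C}$; since this depends on the parameters only through the product, the two scenarios induce the same distribution over cascades. Your live-edge coupling and your count of failed pairs are a cleaner justification of that same computation. The paper's exponent $n_C n - m_C$ is only a loose count, which is harmless because it depends on $C$ alone.
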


Proofs of propositions~\ref{thm:edge_intervention}
and~\ref{thm:network_uncertainty} can be found in
\apptext{Appendix~\ref{app:learnability}}{Appendix A in the extended
version~\shortcite{aljundi2026boundarydegreenodelevelfeature}}. 

\begin{remark}
For simplicity, we have considered homogeneous edge probabilities in the above
results. They can be easily extended to the heterogeneous setting (set the
product of the probabilities to~$\beta_e$ for each edge~$e$ instead of the
same~$\beta$).
\end{remark}

Now, we show that there exist pairs of scenarios defined on graphs with
attributes or node~(or edge) labels that cannot be distinguished from one
another if observed cascades are unlabeled. But these pairs can be easily
identified if the cascade nodes and edges inherit the labels from the graph. For
a graph~$G$ with node labels, let~$\cset$ denote the set of cascades without
node labels and~$\cset^+$ denote the set of cascades with node labels. 
\begin{proposition}
\label{thm:labeled}
There exists a graph~$G$ with node labels and two scenarios~$\scn_1$
and~$\scn_2$ such that (i)~for any set of training
samples~$\big\{(\cas,\ell)\mid \cas\in\cset,~\ell\in\{1,2\}\big\}$, no learner
can solve $\prob(\scn_1,\scn_2)$; and (ii)~there exists a GNN that can solve
$\prob(\scn_1,\scn_2)$ given training samples consisting of labeled
cascades~$\big\{(\cas^+,\ell)\mid \cas^+\in\cset,~\ell\in\{1,2\}\big\}$.
\end{proposition}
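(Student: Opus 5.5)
We prove the statement by an explicit construction with a label-swapping symmetry, in the spirit of the \sbmthreecomm{} design. Take $G$ to be a ``double star'': a root $r$ adjacent to two hubs $a$ and $b$, where $a$ has $k$ leaf neighbors $A=\{a_1,\dots,a_k\}$ and $b$ has $k$ leaf neighbors $B=\{b_1,\dots,b_k\}$. Label $a$ and the nodes in $A$ with label~1, and $b$ and the nodes in $B$ with label~2; $r$ gets label~0. The graph automorphism $\pi$ swapping $a\leftrightarrow b$ and $a_i\leftrightarrow b_i$ fixes $r$ and maps label~1 to label~2. Define $\scn_1$ as IC seeded at $r$ where edges incident to label-1 leaves carry transmission probability $p$ and edges incident to label-2 leaves carry $q\neq p$. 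Define $\scn_2$ as the mirror image, with $p$ and $q$ swapped. The root--hub edges carry a common probability $p_0\in(0,1)$. By the Remark, heterogeneous edge probabilities are allowed, and Equation~\ref{eqn:prob_cascade} generalizes to a product over cascade edges and failed-transmission edges.

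For part~(i), we show that for every cascade $\cas$, $\pr_{\scn_1}(\cas)=\pr_{\scn_2}(\pi(\cas))$. This holds because $\pi$ is a bijection of $G$ that carries the edge probabilities of $\scn_1$ onto those of $\scn_2$, so the product formula transforms term by term. An unlabeled cascade is an isomorphism class rooted at $r$, and $\cas$ and $\pi(\cas)$ lie in the same class. Summing over the cascades in a class therefore shows that the distribution over unlabeled rooted cascades is identical under $\scn_1$ and $\scn_2$. Since the two class-conditional distributions coincide, the Bayes error is $1/2$ for every classifier and every training set, so no learner can solve the problem. This follows the same logic as the argument used for Propositions~\ref{thm:edge_intervention} and~\ref{thm:network_uncertainty}, with equality of distributions replacing equality of $\pr(\cas)$.

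For part~(ii), we exhibit a GNN with one message-passing layer. Each node feature is a one-hot encoding of its label, and the model uses sum pooling followed by a readout. This readout computes $N_1-N_2$, the difference between the number of infected label-1 leaves and label-2 leaves, and a head thresholds it at $0$. Given that both hubs are infected, $N_1\sim\mathrm{Bin}(k,p)$ and $N_2\sim\mathrm{Bin}(k,q)$ under $\scn_1$, with the roles of $p$ and $q$ swapped under $\scn_2$. By Hoeffding's inequality the sign of $N_1-N_2$ identifies the scenario with probability $1-e^{-\Omega(k(p-q)^2)}$, conditioned on both hubs being infected. We then choose $p_0$ close to $1$ and $k$ large so that the overall accuracy exceeds any target below $1$. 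The same statistic, or the Bayes-optimal likelihood ratio, can be learned from labeled training samples because it is realizable by the architecture.

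The main obstacle is pinning down what ``solve'' means, since exact identification is impossible when hubs fail to be infected. We will interpret it as accuracy strictly better than $1/2$, which can be made arbitrarily close to $1$, and handle the small-probability event where a hub is not infected as an additive error of at most $1-p_0^2$.
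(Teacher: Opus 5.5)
Your construction takes a different route from the paper's. The paper puts the asymmetry in the seed. It builds clusters $V_1,V_2,V_3$ with $G(V_1\cup V_3)\cong G(V_2\cup V_3)$ via a map $\pi$ fixing $V_3$, and seeds the scenarios at mirror roots $r_1\in V_1$, $r_2\in V_2$. Because the root carries its own cluster label, every labeled cascade of $\scn_\ell$ contains a $V_\ell$ node and no $V_{3-\ell}$ node. So the labeled supports are disjoint, the sum-pooled one-hot vectors are linearly separable by $[1,-1,0]^\top$, and the Perceptron Convergence theorem gives exact identification.

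You instead keep one root fixed by $\pi$ and put the asymmetry in mirrored, label-dependent transmission probabilities. These can be phrased as vaccinating $B$ in $\scn_1$ and $A$ in $\scn_2$, so you do not even need the Remark. Your part (i) is correct: $\pr_{\scn_1}(\cas)=\pr_{\scn_2}(\pi(\cas))$ and $\pi(\cas)\cong\cas$, so the distributions over rooted isomorphism classes coincide. The route also has small advantages. A single seed fixed by $\pi$ sits naturally with the section's assumption of one known root $r$. The tree makes Equation~\ref{eqn:prob_cascade} exact without the same-level relaxation. And the design parallels \sbmthreecomm{}.

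The cost shows up in part (ii), and one step there does not hold as written. With $p_0,p,q\in(0,1)$, every labeled cascade has positive probability under both scenarios; the root-only cascade, for instance, has probability $(1-p_0)^2$ in each. So labels never give exact identification, and the pooled features are not linearly separable. Your claim that the statistic ``can be learned from labeled training samples because it is realizable by the architecture'' therefore does not follow. Realizability of a good classifier is not learnability, and the perceptron argument the paper relies on needs separability.

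This is easy to repair in either of two ways.
\begin{itemize}
\item Keep your parameters and add a learning argument. The threshold $0$ is fixed by the symmetry, so only the orientation has to be learned. Choosing the sign of your pooled statistic that agrees with the majority of training labels is correct with probability $1-e^{-\Omega(m)}$ in the sample size $m$. A VC bound for halfspaces on the three-dimensional pooled vector also works.
\item Recover the paper's exact conclusion inside your construction by taking extreme parameters, e.g.\ $p_0=p=1$, $q=0$. Each scenario then yields a single cascade, and the two are mirror images, hence identical when unlabeled. The pooled label-1 minus label-2 count is $k$ versus $-k$, so the perceptron theorem applies verbatim.
\end{itemize}
If you keep the statistical version, say explicitly that your notion of ``solve'' is weaker than the exact separation the paper's proof delivers. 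That notion is accuracy bounded away from $1/2$ and tending to $1$ as $p_0\to 1$, $k\to\infty$. It is, however, consistent with the posterior-$1/2$ criterion the paper uses for part (i).
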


\begin{proof}[Proof sketch]
We build a graph with three node clusters $V_1,V_2,V_3$ where $G(V_1\cup V_3)$
and $G(V_2\cup V_3)$ are isomorphic; the two scenarios are seeded in $V_1$ and
$V_2$ with their cross-cluster edges removed. By this isomorphism the scenarios
induce identical distributions over \emph{unlabeled} cascades, so no learner
separates them. If each node instead carries its cluster as a one-hot feature, a
single GNN layer that \emph{sums} the per-node features yields a cascade-level
aggregate that differs across scenarios, which a perceptron separates
(Perceptron Convergence theorem~\shortcite{rosenblatt1958perceptron}). The full
construction is in
\apptext{Appendix~\ref{app:learnability}}{Appendix~A of the extended
version~\shortcite{aljundi2026boundarydegreenodelevelfeature}}.
\end{proof}

\paragraph{Effect of including boundary degree as node feature.} Finally, we
show the effectiveness of including local structural features of nodes by
revisiting Proposition~\ref{thm:network_uncertainty}. We recall that no learner
can distinguish between the two scenarios from unlabeled cascades in this case.
However, including each node's boundary degree as a feature makes the two
scenarios distinguishable by the same sum-then-perceptron learner used in the
sketch above. For~$\scn_1$ and~$\scn_2$, the expected boundary
degrees of any node are~$\beta=p_e(1-\pt)n$ and~$\beta'=p_e'(1-\pt')n$
respectively. One can choose the edge probabilities and transmission
probabilities such that~$p_e\pt=p_e'\pt'$ but the difference between~$\beta$
and~$\beta'$ to be very large. Clearly, this learner separates the scenarios
from the~$\beta$ values alone.

\section{CONCLUSION}
We propose boundary degree as a per-node cascade feature for epidemic scenario
identification. Through ablation on realistic and SBM networks with an MPNN
classifier, we show that boundary degree alone improves accuracy by 19\%, that
edge labels (whose importance \shortciteN{harrison2023identifying} observed
empirically) consistently help, and that the two are complementary. We prove
that certain scenario pairs are indistinguishable without boundary or edge
information, giving theoretical grounding for both.

The prior feature-engineering approach~\shortcite{harrison2023identifying}
included aggregate boundary statistics but did not rank them among its top
features; representing boundary degree per node makes its importance clear. We
therefore recommend that contact tracing applications track contacts with
non-infected individuals, not only transmissions. The classifier is robust to
partial observation, reaching over 90\% accuracy at $T=70$ under 20\% coverage,
and generalizes across networks with minimal loss.

Our study has several limitations. The learnability results assume homogeneous
transmission and the IC model. Boundary degree requires knowledge of the underlying
contact network, which motivates our contact-tracing recommendation but may be
only partially available in practice. Finally, our empirical evaluation uses
synthetic populations of two US states and a fixed set of scenarios; broader
populations and scenario families remain to be tested.

In the future, we envision using GNNs more broadly to analyze agent-based
epidemic simulations, with scenario identification being one task among several.
The same approach applies to forecasting, source detection, and assessing
interventions, with the GNN operating on the cascade directly, not hand-crafted
summaries. Which architectures capture cascade
structure and how choice of features interacts with the architecture are
important aspects to consider. Instead of classifying cascades into a fixed set of scenarios,
another direction is to estimate the underlying disease and intervention
parameters directly, and to detect cascades that come from scenarios we did not
train on. Finally, extending our learnability results beyond the homogeneous IC model to
heterogeneous and SEIR dynamics and validating the approach on real cascade
data is a natural next step.

\section*{ACKNOWLEDGMENTS}
%% This work was partially supported by the University of Virginia Strategic Investment Fund (SIF160); NSF grants CCF-1918656, CCF-1917819, OAC-2027541, IIS-2027007, IIS-1633028, and OAC-1916805;  DTRA subcontract/ARA S-D00189-15-TO-01-UVA, and the
%% % PGCoE UVA-BI
%% Centers for Disease Control and Prevention (CDC) through Pathogen Genomics Centers of Excellence network (PGCoE) grant 6NU50CK000555-03-01.

This work was partially supported by the University of Virginia Strategic Investment Fund SIF160 and SIF176A Contagion
Science; NSF grants CCF-1918656, CCF-1917819, OAC-2027541, IIS-2027007, IIS-1633028, and OAC-1916805;  DTRA subcontract/ARA S-D00189-15-TO-01-UVA, and the
% PGCoE UVA-BI
Centers for Disease Control and Prevention (CDC) through Pathogen Genomics Centers of Excellence network (PGCoE) grant 6NU50CK000555-03-01.

\ifextended
\appendix
\clearpage
\appendix
\section{ADDITIONAL DETAILS FOR SECTION~\ref{sec:learnability} LEARNABILITY}\label{app:learnability}

\begin{proof}[\bf Proof of Proposition~\ref{thm:edge_intervention}]
In~$\scn_1$, the probability that an infected node~$u$ infects neighbor~$v$
is~$(1-p_i)\cdot\pt$; the two nodes interact with one another and then
transmission occurs. Therefore, for any cascade~$\cas$, by
(\ref{eqn:prob_cascade}), $\pr(\cas)=\beta^\mc (1-\beta)^\dc$
where~$\beta=(1-p_i)\cdot\pt$. By the same argument as above, the probability
that an infected node~$u$ infects neighbor~$v$ in $\scn_2$
is~$(1-p_i')\cdot\pt'=\beta$. Hence, every cascade~$\cas$ has the same
probability of occurrence in both scenarios or in other words,~$\cset$ has
identical distributions in both scenarios, i.e.,~$\Pr(\scn_1\mid\cas) =
\Pr(\scn_2\mid\cas)=\frac{1}{2}$. Hence, it is impossible to determine the
scenario by only observing the cascades.
\end{proof}

\begin{proof}[\bf Proof of Proposition~\ref{thm:network_uncertainty}]
In~$\scn_1$, for any cascade~$C(V_\cas,E_\cas)$, the probability of~$e\in
E_\cas$ is~$p_e\cdot\pt=\beta$. For any~$u\in V_\cas$ and $(u,v)\notin E_\cas$
the probability that transmission does not occur on this edge is~$1-\beta$. The
number of such~$(u,v)$ candidates is~$\nc\cdot n-\mc$. Hence,
following~(\ref{eqn:prob_cascade}),~$\pr(\cas)=\beta^\mc(1-\beta)^{\nc\cdot
n-\mc}$. As in Proposition~\ref{thm:edge_intervention},~$\cas$ has the same
probability in~$\scn_2$ as well. Hence, proved.
\end{proof}
\begin{proof}[\bf Proof of Proposition~\ref{thm:labeled}]
We will first construct~$G$ and the two scenarios. The graph~$G$ consists of
a node set that can be partitioned into three clusters:~$V=V_1\uplus V_2\uplus
V_3$. For every node~$v_{1i}\in V_1$, the corresponding node in~$V_2$
is~$v_{2i}$. Let~$\pi: V_1\cup V_3\rightarrow V_2\cup V_3$ be defined as
follows:~$\forall~v_{1i}\in V_1$,~$\pi(v_{1i})=v_{2i}$ and~$\forall~v\in
V_3$,~$\pi(v)=v$. There exist no edges between~$V_1$ and~$V_2$.  Also, for every
pair~$u\in V_1, v\in V_3$, $\{u,v\}\in E(G)\Leftrightarrow \{\pi(u),\pi(v)\}\in
E(G)$.  The graph induced on~$V'\subset V$ is denoted by~$G(V')$.  Note
that~$G(V_1)$ is isomorphic to~$G(V_2)$. Also,~$G(V_1\cup V_3)$ is isomorphic
to~$G(V_2\cup V_3)$. Both scenarios have the same diffusion model but with
different root nodes~$r_\ell\in V_\ell$,~$\ell=1,2$. In
scenario~$\scn_\ell(G,\ic(\pt,r_\ell),\inter_\ell)$,~$\ell=1,2$, we implement
the edge intervention scenario~$\inter_\ell$ where all edges between~$V_\ell$
and~$V_{3-\ell}$ are removed.

Note that in~$\scn_\ell$, only nodes from~$V_\ell$ and~$V_3$ can be infected. By
isomorphism of~$G(V_1\cup V_3)$ and~$G(V_2\cup V_3)$, for every cascade~$C_1$
from~$\scn_1$, there exists a corresponding unique cascade~$C_2$ corresponding
to~$\scn_2$ such that~$\Pr(C_1)=\Pr(C_2)$. The two cascades have the same nodes
from~$V_3$, and satisfy (i)~$v\in V(C_1)\Leftrightarrow \pi(v)\in V(C_2)$ and
(ii)~$\{u,v\}\in E(C_1)\Leftrightarrow \{\pi(u),\pi(v)\}$. By the same argument
as in the previous result, no learning algorithm can distinguish between the two
scenarios.

Suppose each node was labeled based on its cluster membership using one-hot
encoding. For a node~$v$, let~$h(v)=[I_1(v), I_2(v), I_3(v)]^\top$ be its
feature vector, where~$I_\ell(v)=1 \Leftrightarrow v\in V_\ell$,~$0$ otherwise.
Since by design, any cascade generated from~$\scn_1$~($\scn_2$) contains at
least one node from~$V_1$~($V_2$) and no node from~$V_2$~($V_1$), one can easily
distinguish between the two scenarios.
Based on this, we show that a network comprising a trivial single-layer GNN
followed by a single perceptron can learn to distinguish between the two
scenarios.

The GNN layer acts as an aggregator that takes the input graph~(of arbitrary
size) with features. For a cascade~$C$ of size~$n_c$ with adjacency matrix~$A$
and input features~$X\in \reals^{n_c\times 3}$, the output of the GNN layer
is~$H_1=\sigma(AXW_1)$, where~$\sigma$ is the activation function and~$W$ is the
weight matrix. In our case,~$\sigma$ is the identity function and~$W$ is the
trivial~$3\times3$ identity matrix (fixed weights). The resulting output
$H\in\reals^{n_c\times 3}$ is aggregated by summing up across all
nodes:~$H_2=\mathbf{1}^\top H_1\in\reals^{1\times3}$. Here,~$\mathbf{1}$ is the
all ones vector of dimension~$n_c\times1$.

Finally, the perceptron with trainable weights~$W_2\in \reals^{3\times 1}$ is
applied on the output of the aggregator: $H_2W_2$. Note that based on cluster
membership, for~$\scn_1$,~$H_2(1)\ge1$ while~$H_2(2)=0$, and vice versa
for~$\scn_2$. Clearly, the~$H_2$ vectors form a linearly separable space: given
the output variable~$y=\{1,-1\}$~($1$ for~$\scn_1$), a weight
vector~$W^*=[1,-1,0]^\top$, $yH_2W^*>0$. By the Perceptron Convergence
theorem~\shortcite{rosenblatt1958perceptron}, it follows that the problem is
easily learnable.
\end{proof}

\section{GNN MODEL DESCRIPTION}
\label{app:gnn}
\subsection{Message Passing Neural Networks}\label{app:mpnn} An MPNN takes as
its input a graph $G$ in terms of its connectivity information, its node
attributes, and, optionally, its edge attributes. It outputs a latent
representation matrix $\nodeembs\in \mathbb{R}^{|V_G|\times d}$ with row
$\nodeembs_u$ corresponding to a $d-$dimensional latent representation of vertex
$u$. The matrix $\nodeembs$ can also be aggregated to produce a single latent
representation vector for the entire graph $\graphembs\in \mathbb{R}^{d}$. Both
$\nodeembs$ and $\graphembs$ can be used to carry out downstream tasks. 

The operation of a MPNN can be split into two sequential phases, the
message-passing phase and an optional readout phase. At the beginning of the
message passing phase, hidden representations are assigned to the graph nodes,
either using the attribute function $\phi_{V_G}$, or, if nodes are unlabeled,
through some other operation (e.g. random labeling~\shortcite{hamilton2020graph}).
These representations are concatenated to create the initial latent
representation matrix $\hiddenembs^0\in\mathbb{R}^{|V_{G}|\times g}$ where
$\hiddenembs^0_u$ is the $g$-dimensional latent representation of node $u$.
Then, $\hiddenembs^0$ is iteratively propagated through a sequence of
$\gnnnumlayers$ independent layers, with layer $t+1$ passing the information
embedded in $\hiddenembs^t$ further down the graph to produce
$\hiddenembs^{t+1}$, until arriving at the final representation matrix
$\hiddenembs^{\gnnnumlayers}\equiv \nodeembs$. 

More precisely, layer $t+1$ in the message passing phase produces the latent
representation of node $u$ as follows,
\[\hiddenembs^{t+1}_u = \gnnupdate^t(\hiddenembs^{t}_u,
\gnnaggregate_{\gnnsample^t(G, u)}\gnnmessage^t(\hiddenembs^t_{u},
\hiddenembs^t_{v}, \ell_{E}{(u,v)})\] where $\hiddenembs^t_u$ is the latent
representation of node $u$ at layer $t$, $\gnnmessage$ creates a message from
$v$ to $u$, $\gnnaggregate$ aggregates messages from the "neighbors" of $u$
generated from sampling function $\gnnsample^t(G, u)$, and $\gnnupdate$ updates
the latent representation of $u$. Note that, depending on the architecture used,
$\gnnupdate, \gnnaggregate, \gnnsample$ and $\gnnmessage$ may contain learnable
parameters. 

The second phase in a MPNN is the readout phase. It performs the following
operation
\[\graphembs=\gnnreadout(\hiddenembs^\gnnnumlayers)\] where $\gnnreadout$ is an
aggregation function that produces a single graph-level latent representation
$\graphembs$. This phase is only needed when the underlying task is at the
granularity of graphs.

Typically, GNNs are trained through a supervised learning paradigm where the
final latent representations are passed to downstream task layers (such as a
classification layer) and the loss from these layers is backpropagated through
the GNN to update its parameters. In other words, the GNN and the downstream
task layers are trained jointly.

\subsection{Model Overview}\label{app:model} We propose to use a MPNN-based
model to tackle the scenario identification problem. Formally, we create a model
$\module : \cset \to \scnset$ that takes a cascade $\cas$ and predicts a
scenario $\scn \in \scnset$ as the most likely scenario to have produced $\cas$
where $\scnset$ is a set of scenarios $\module$ learns at training time. Model
$\module=(\gnnmodel, \classhead)$ is composed of two components:
\begin{enumerate}
   \item MPNN module $\gnnmodel$ which takes a cascade $\cas$ and produces a
   cascade-level latent representation vector $\graphembs\in \mathbb{R}^{d}$, 
   \item classification head module $\classhead$ which is a Feed-Forward Neural
   Network (FFNN) that takes $\graphembs$ and produces a probability
   distribution for $C$ over the scenario set $\scnset$. 
\end{enumerate}
A model is trained in a supervised manner using a dataset of cascades and their
corresponding scenario labels $\dataset = \{(\cas_1, \ell_1), \dots\}$ where
$\cas \in \cset$ and $\ell_i \in [\scnset]$. Training is done by minimizing the
following empirical loss function:
\[\mathcal{L}(\dataset) = \frac{1}{|\dataset|}\sum_{(\cas, \ell) \in
\dataset}J(\module(\cas), \mathbb{O}_{[\scnset]}^\ell)\] where $J$ is the
cross-entropy loss function. In practice, loss is calculated over a mini-batch
of cascades, and the model is trained using stochastic gradient descent.

Algorithm~\ref{app:alg:train} summarizes the procedure for training module
$\module$. The algorithm takes as input a dataset $\dataset$, as well as a GNN
model $\gnnmodel$ and a classification head $\classhead$, both of which can be
initialized randomly or with pre-trained weights. The algorithm trains the model
by iterating over mini-batches of cascades, calculating the loss for each
cascade, and backpropagating the aggregated loss through the model.

\begin{algorithm}
\caption{\trainfunction}
\begin{algorithmic}[1]
    \REQUIRE training data $\mathcal{D}=\{(\cas_1, \ell_1), \dots\}$; MPNN model
    $\gnnmodel$; classification head $\classhead$\\
    \ENSURE trained model $\gnnmodel$; trained classification head $\classhead$
    \FOR{$b \in $~{\tt miniBatch}($\mathcal{D}$)} \STATE $loss \gets 0$
    \FOR{$\cas, \ell \in b$} \STATE $\graphembs \gets ${\tt
    ApplyGNN}($\gnnmodel,C$) \STATE $loss \gets loss + J(\classhead(\graphembs),
    \mathbb{O}^{\ell}_k)$ \ENDFOR \STATE {\tt backpropagate}($loss$) \ENDFOR
\end{algorithmic}
\label{app:alg:train}
\end{algorithm}

In addition to training the model from scratch, we also consider a transfer
learning setting where a model is trained on a source dataset $\dataset_s$ and
then used as a classifier on a target dataset $\dataset_t$. Following the
taxonomy of \shortcite{Weiss_Khoshgoftaar_Wang_2016}, our approach falls under the
homogeneous transfer learning class. In this setting, the input data of both the
source and target datasets are from the same domain, as well as the output
classes of the source and target datasets. However, the \textit{distribution} of
the target dataset is different from that of the source dataset. For example,
the source and target datasets might be collections of cascades on two different
social networks, but both networks' node and edge features would come from
shared domains, and both datasets would have the same set of scenarios. We
demonstrate using the trained model on $\dataset_t$ directly without any further
training, as well as using the model after fine-tuning it on a subset of samples
from $\dataset_t$.

We demonstrate the transfer learning approach in
Algorithm~\ref{app:alg:higherlevel}. The algorithm takes source and target
datasets as input, as well as an enumerator indicating the type of fine-tuning
that will take place. First, model parameters are initialized according to input
and output data domains. The model is then trained on the source dataset. If the
target dataset is different from the source dataset (signalling that transfer
learning is to be used), a subset of the model parameters are frozen according
to the fine-tuning parameter (preventing them from being updated) and the model
is trained on the target dataset. 

\begin{algorithm}
\caption{\highlevelfunction}
\begin{algorithmic}[1]
    \REQUIRE source dataset $\mathcal{D}_s=\{(\cas_1^s, \ell_1^s), \dots\}$;
    target dataset $\mathcal{D}_t=\{(\cas_1^t, \ell_1^t), \dots\}$; parameters
    to freeze {\tt to\_freeze}\\
    \ENSURE trained target model $\gnnmodel$; trained classification head
    $\classhead$ \STATE $\gnnmodel_i, \classhead_i \gets$ {\tt
    InitializeModel}($\dataset_s$) \STATE $\gnnmodel_s, \classhead_s \gets$
    \trainfunction($\mathcal{D}_s$, $k$, $\gnnmodel_i$, $\classhead_i$)
    \IF{$\mathcal{D}_t \neq \mathcal{D}_s$} \STATE {\tt
    freezeParameters}($\gnnmodel_s$, $\classhead_s$, {\tt to\_freeze}) \STATE
    $\gnnmodel_t, \classhead_t \gets$ \trainfunction($\mathcal{D}_t$, $k$,
    $\gnnmodel_s$, $\classhead_s$) \ENDIF
\end{algorithmic}
\label{app:alg:higherlevel}
\end{algorithm}

\subsection{Architectural Choices for \prob}\label{app:arch} We use an
edge-feature-supported GNN architecture. We define the architecture using the
MPNN terminology described in Appendix~\ref{app:mpnn}.

In the message passing phase, each node receives messages from its incoming
neighbors. The message function is a single-layered feed forward neural network
(FFNN) defined as follows
\[\gnnmessage^t(\hiddenembs^t_u, \hiddenembs^t_v, \ell_E(u,v)) =
concat(\hiddenembs^t_v, \ell_E(u,v))\cdot\mathbf{W}_{message}^t\] where
$\mathbf{W}^t_{message}$ is a trainable weight matrix. If no edge features are
used, the message function becomes:

\[\gnnmessage^t(\hiddenembs^t_u, \hiddenembs^t_v, \ell_E(u,v)) = \hiddenembs^t_v
\cdot\mathbf{W}_{message}^t.\] Incoming messages to a node are sum aggregated.
The update operation is done in two stages. Firstly, we calculate an
intermediate state $\bar{\hiddenembs}_{u}^{t+1}$, 
\[\bar{\hiddenembs}_{u}^{t+1} = \hiddenembs_{u}^t \cdot \mathbf{W}_{self}^t +
\sum_{\{v : (v,u) \in E_C\}}{\gnnmessage^t(\hiddenembs^t_u, \hiddenembs^t_v,
\ell_E(u,v))},\] where $W_{self}^t$ is a trainable weight matrix. Then, we
augment it with batch normalization~\shortcite{ioffe2015batch}, a non-linear
activation function, and an optional skip
connection~\shortcite{dehmamy2019understanding}. Depending on whether a skip
connection is used, the final hidden state of a node is calculated in one of the
three equations shown in Table~\ref{app:tab:update}. 
\begin{table}
    \centering
    \caption{Update functions for the GNN architecture, where
    $BN_{\mathcal{B}}(\cdot)$ is a batch normalization function over training
    batch $\mathcal{B}$.}\label{app:tab:update}
    \begin{tabular}{ll}
        \toprule
        Method & Update function $update(h_{u}^t, a_{u}^t)$\\
        \midrule
        {\tt stack} & $PreLu(BN_{\mathcal{B}}(\bar{\hiddenembs}_{u}^{t+1}))$ \\
        {\tt skipsum} & $\hiddenembs^t_u +
        PreLu(BN_{\mathcal{B}}(\bar{\hiddenembs}_{u}^{t+1}))$ \\
        {\tt skipconcat} & $concat(\hiddenembs^t_u,
        PreLu(BN_{\mathcal{B}}(\bar{\hiddenembs}_{u}^{t+1})))$ \\
        \bottomrule
    \end{tabular}
\end{table}

Additionally, before executing the message passing and readout phases, we
generate node hidden representations from their attributes by passing them
through $R$ FFNNs. Each such layer is followed by a batch normalization layer and
a non-linear activation functions whose architectures match those used in the
message-passing layers. The FFNN layers do not have a bias term due to the batch
normalization operation. The output representations of the $\mathcal{T}$-th
layer of the GNN implicitly comprises the matrix $\nodeembs$.

Since our task is at the graph level, we require every cascade to have a single
latent representation that is agnostic to the number of nodes in the cascade.
For this reason, we add a readout phase to the MPNN architecture that reduces
the embeddings of all the nodes of a cascade into a single vector. The readout
function we use is a summation of the final hidden states of all the vertices of
the cascade. More formally,
\[\graphembs = \sum_{u \in V_C}\hiddenembs^\mathcal{T}_{u}.\] 

Once $\graphembs$ is generated, it is passed to a series of task-specific FFNN
layers. These layers will produce the output used for classification and for
calculating the loss of the model for training, and so we refer to them combined
as the classification head of the model. Classification head layers have bias
terms and each is followed with an activation function, however, they do not
have batch normalization. The output of the final FFNN layer is a vector
$\classheadout \in [0,1]^{|\scnset|}$ comprising a probability distribution over
the set of scenarios $\scnset$ used during training. 

To clarify, the preprocessing and classification-head layers are not part of the
GNN architecture and do not take the connectivity information of the cascade
into account. Each of these layers is shared across all the nodes of all the
cascades in an input batch, but a layer processes each node independently of all
the others.

\subsection{GNN Implementation Details}\label{app:imp}
We use an Adam optimizer, and a PReLU activation function for all layers except
the classification head, which uses the softmax activation. Our GNN is a
4-layers deep {\tt generaledgeconv} architecture in GraphGym, with 300 hidden
units in each layer. For the remaining hyperparameters, we use different flavors
depending on the size of cascades with which the model is being trained. The
differences as well as the values we experimented with are shown
Table~\ref{app:tab:hyperparams}.

Seeding the model was done deterministically; the seed was passed to the
experiment and that seed controlled the process in which nodes were chosen at
the coverage determinism stage as well as the train-test splitting stage.

\begin{table*}[]
    \caption{Hyperparameters used for training GNNs on small and large cascades.
        Small cascades are defined as those capped at a maximum time of 50 days
        and SBM graphs, and large cascades are those with a maximum time of 70
        days.}
    \label{app:tab:hyperparams}
    \centering
    \begin{tabular}{|l|p{1.5in}|p{0.8in}|p{0.8in}|p{0.8in}|}
        \hline
        \textbf{Hyperparameter}        & \textbf{Range of values tried} &
        \textbf{SBM} & \textbf{\popdata{} $T\in\{30, 50\}$} & \textbf{\popdata{}
        $T=70$} \\ \hline
        Message aggregation            & add, mean, max                 & add
        & add                                  & add                        \\
        \hline
        Node feature processing layers & 1, 2, 3                        & 1
        & 1                                    & 2                          \\
        \hline
        Message passing layers         & 2, 4, 6                        & 4
        & 4                                    & 4                          \\
        \hline
        Classification head layers     & 1, 2, 3                        & 2
        & 2                                    & 3                          \\
        \hline
        Skip connection                & skipsum, skipconcat, stack     &
        skipsum      & skipsum                              & stack \\ \hline
        Graph pooling operation        & add, sum max                   & add
        & add                                  & add                        \\
        \hline
        Learning rate                  & 0.0001, 0.001, 0.01            & 0.01
        & 0.001                                & 0.0001                     \\
        \hline
        \end{tabular}
    \end{table*}

\section{FULL EXPERIMENT DESIGN}
\label{app:eval}
We begin this section by describing the two types of cascade datasets we use for
evaluation. The first, described in Appendix~\ref{app:eval:tnva}, corresponds to
cascades simulated on synthetic population contact networks of two U.S. states,
Virginia and Tennessee. The second, described in Appendix~\ref{app:eval:sbm},
corresponds to cascades simulated on random graphs, specifically, Stochastic
Block Model (SBM) graphs. Appendix~\ref{app:eval:models} describes the
baseline model used for evaluation, and Appendix~\ref{app:hwsw} the software
and hardware details.

\subsection{Synthetic Population Datasets (\popdata)}\label{app:eval:tnva}
\subsubsection{Contact Graphs}
We construct two cascade datasets using networks obtained
from~\shortcite{virginiadata}. Each dataset corresponds to a set of cascade graphs
generated through SEIR disease spread simulations on synthetic contact graphs,
one graph is for the state of Tennessee (TN), and the other for the state of
Virginia (VA). Structural parameters of these two networks appear in
Table~\ref{app:tab:graphs}. A contact graph~$G(V,E)$ is an undirected graph that
captures a snapshot of interaction dynamics within a population. Each node in a
contact graph is an individual and its attributes correspond to properties of
that individual such as their age and designation. Each edge is an interaction
between two nodes and it is attributed with a numerical weight equal to the
duration of the interaction, and two categorical labels describing to the
activities of the incident nodes at interaction time. The two categorical labels
can be transformed into a single categorical label describing the nature of
interaction between the two endpoints and can be of the following types:
(i)~\texttt{essential} (like an interaction at home or at work);
(ii)~\texttt{non-essential} (like shopping); or (iii)~\texttt{mixed} (one
individual performing an essential activity, while the other individual a
non-essential activity).   
% TODO: fix the overfull box
\begin{table}[h!]
\caption{\small Structural information about the synthetic contact networks
from Harrison et al. used here. Both networks have a diameter of
length 14.}
\label{app:tab:graphs}
\centering
\begin{tabular}{ccccc}
    Network & |V|       & |E|        & Degree & Degree \\
            &           &            & Max    & Avg.   \\
    TN      & 6,041,517 & 62,149,441 & 461    & 20.6   \\
    VA      & 7,602,717 & 83,162,927 & 543    & 21.9  
    \end{tabular}
\end{table}

%\begin{table*}
%    \centering
%    \begin{tabular}{l|ll}
%        Node attributes & Gender & Values\\
%        \hline
%         & Race & Male, Female\\
%         & Designation &Retail, Medical, Government, Care facilitation,
%         Education, Military, Other \\
%         & Age & Numeric value\\
%         & Age group & Preschool (0-4), Students (5-17), Adults (18-49), Older
%         Adults (50-64), Seniors (65+)\\
%         \hline
%        \hline
%         Edge attributes & Activity & home, work, shop, school, college,
%         religion, other \\
%        \hline
%    \end{tabular}
%    \caption{Node attributes }
%    \label{app:tab:full_atts}
%\end{table*}

\subsubsection{Scenarios}\label{app:tnva:scen} We use four different scenarios
to generate cascades. Scenarios vary by transmissibility, which is a diffusion
model parameter, and two intervention-specific parameters. Transmissibility
$\tau$ correlates with the infection propensity of the disease over all contact
graph edges. The first intervention we consider is vaccination and is defined as
$\inter_{vax}(\sigma_{vax}, \alpha)$. It is a pharmaceutical intervention that
reduces the probability of infection of $|V|\cdot\sigma_{vax}$ randomly selected
(vaccinated) nodes by $\alpha = 80\%$. The second intervention is a generic
social distancing intervention and is defined as $\inter_{gsd}(\sigma_{gsd})$.
It is a non-pharmaceutical intervention that removes all~\texttt{non-essential}
edges incident on $|V|\cdot\sigma_{sgd}$ randomly selected (compliant) nodes.
The parameters corresponding to each scenario are shown in
Table~\ref{tab:design}. The diffusion process parameters and scenario parameters
were chosen to ensure that the cascades generated under different scenarios are
not easily distinguishable by cascade measures such as the number of infections
per day, or overall cascade sizes. 

% MOVED to main text (Table~\ref{tab:design}); gated so it vanishes under
% \showmovedfalse. Float cannot use the movedout environment, so we gate with \ifshowmoved.
\ifshowmoved
\begin{table}[h!]
\caption{$^*$Scenarios on the VA network used  $\tau = 0.155$. $^\dag$Scenario on the VA
network used $\sigma_{gsd} = 60\%$.}
%%\aaa{for the TN contact network}.}
\label{app:tab:design}
\centering
\begin{tabular}{ |c|c|c|c| }
\hline
\textbf{Scenario} & $\tau$ & $\sigma_{vax}$ & $\sigma_{gsd}$\\
\hline
No vax + Low GSD & 0.09 & None & 25\%\\
No vax + High GSD & 0.09 & None & 70\%\\
Vax + Low GSD & $0.16^*$ & 50\% & 25\%\\
Vax + High GSD & $0.16^*$ & 50\% & $65\%^\dag$\\
\hline
\end{tabular}
\smallskip
\end{table}
\fi

\subsubsection{Cascade Generation}
Cascade simulation proceeds as follows: given the contact graph and a scenario,
the simulator applies vaccination and SGD interventions to randomly selected
nodes according scenario parameters, it randomly selects 20 nodes to be
initially infected nodes (i.e. seeds), and allows the disease to propagate
through the contact network for 300 days. We generate 100 cascades per scenario.
Thus, each of the TN and VA data sets includes 400 cascade graphs. From each set
of 400 cascades, we produce three datasets, each with a different time horizon.
We use time horizons $T\in\{30,50,70\}$ for our experiments, as it was shown
that for this setting, scenarios are less distinguishable earlier in the
simulation~\shortcite{harrison2023identifying}. To clarify, setting the time horizon
of a cascade to $t$ means removing all infections which occurred at time $t'>t$.
Cascades are generated using the EpiHiper epidemic
simulator~\shortcite{chen2025epihiper}.

\subsubsection{Data Representation}\label{app:eval:tnva:datarep} Nodes have a
numerical age value, as well as categorical values for their age group, race,
gender, and designation. Additionally, we augment for each node $u$ in the
cascade their boundary degree (as defined in Section~\ref{sec:node_rep}).

Unless otherwise stated, edges are attributed with the concatenation of two
one-hot-encoded vectors, each of size 7, corresponding to the activity of the
source and destination nodes, respectively.

%We experiment with three different alternatives of edge feature representation
%that vary in terms of the granularity of the information represented:
%\begin{itemize} \item \noedge: we do not use edge features in the
%classification procedure. \item \threeway{}: we use a single one-hot-encoding
%vector that labels an edge as~\texttt{essential}, ~\texttt{non-essential}, or
%~\texttt{mixed}, as described previously in this section. \item \fourteenway{}:
%we concatenate two one-hot-encoded vectors, each of size 7, corresponding to
%the activity of the source and destination nodes, respectively.  
%\end{itemize} 
\subsection{Stochastic Block Model (SBM) Dataset}\label{app:eval:sbm}
\subsubsection{Contact Graphs}
We carry out multiple empirical studies on cascade datasets generated on random
SBM graphs. In these graphs, nodes are unattributed while edges are attributed
with two categorical labels, one for each endpoint. The labels correspond to the
community of the corresponding node.
\subsubsection{Scenarios}
Scenarios are described by the parameters used to generate the underlying SBM
contact graph and the parameters of the contagion model. We do not use any
explicit interventions in our scenario definitions, though different scenario
parameters can be interpreted as applying different interventions. As a
diffusion model, we use an independent cascade (IC) model. We define a scenario
$\scn=(\sbm(n, \pi_k, W), \icmodel_{\sbm}(\mathbf{T}, \nu))$  by its SBM
parameters as described in Section~\ref{sec:preliminaries} as well as its
diffusion model parameters $\mathbf{T} \in [0,1]^{k \times k}$ and $\nu$ which,
combined, define transition probabilities. To elaborate, given these parameters,
an infected node $u$ in community $i$ infects a susceptible node $v$ in
community $j$ with probability $T_{ij} \cdot \nu$.

\subsubsection{Cascade Generation}
Given a scenario tuple $(\scn_1, \dots)$, we generate for each scenario an SBM
graph and 1000 cascades using the parameters of the scenario. We run the
diffusion process until completion without limiting the time horizon of the
cascades at any fixed point. SBM graphs are generated using the NetworkX
library~\shortcite{hagberg2008exploring}, and independent cascade simulations are
carried out using the EoN package~\shortcite{Miller2019}.

\subsubsection{Data Representation}
Nodes are attributed with their degree in the contact network, their degree in
the cascade, and their boundary degree as defined in Section~\ref{sec:node_rep}.
GNNs require every node to have a feature vector. However, for some of our
experiment, we experiment with the dataset without using the aggregated
features. In such cases, we must add arbitrary features. We experiment with
multiple known workarounds for unattributed graphs, which includes using
randomly sampled features, random one-hot-encoded vectors, and identity feature
vectors~\shortcite{cs224wl3,hamilton2020graph}. We find that performance is not
significantly affected by the  method. We decide to label all nodes with the
identity feature vector $= \{1\}^{30}$ as it was the most efficient solution. As
for edge attributes, we use one of three alternatives to label edges.
\begin{itemize}
    \item \noedge{}: defined exact as in Appendix~\ref{app:eval:tnva:datarep}.
    \item \communitymatch{}: a one-hot-encoded vector that indicates if the
    endpoints of an edge belong to the same community.
    \item \communityid{}: two one-hot-encoded vectors corresponding to the IDs
    of the communities of the endpoint nodes.
\end{itemize}

\subsubsection{Dataset Instantiations}
We experiment with two different SBM datasets, \sbmstudy{} and \sbmthreecomm{}.

\subsection{SBM Dataset Instantiations}\label{app:sbm}
\subsubsection{\sbmstudy{}}\label{app:sbm:study} is a set of 1128 datasets, each
corresponding to two \sbm{} scenarios. This dataset is meant to represent a very
wide range of cross-scenario structural variety. All scenario pairs have
$n=2000$, equisized community sizes, and symmetric $\mathbf{W}$ and $\mathbf{T}$
matrices, but the values of $\mathbf{W}$, $\mathbf{T}$ and $\nu$ vary across
scenarios. Scenario pair parameter selection is done as follows: we select a
range of possible values for the inter- and intra-community SBM edge
probabilities and IC transition probabilities, as well as a range for the
transmissibility parameter $\nu$.  We then generate 1000 cascades for all
possible scenarios by taking the Cartesian product of these ranges. Initially,
we add all possible scenario pairs from this set of scenarios to \sbmstudy{}.
Then, for each pair of scenarios, we exclude it from \sbmstudy{} if 
\begin{enumerate*}[label=(\roman*)]
    \item for either scenario, the mean number of infected edges in the all the
scenario's cascades is less than $5\%$ of the total nodes in the underlying
graph, or
    \item the mean number of infections of the cascades of one scenario is more
    than $6\times$ that of the other scenario. 
\end{enumerate*} 
The ranges used for generating the scenario pairs are in
Table~\ref{table:sbmstudy}.
\begin{table*}[t]
    \caption{Parameters used to generate the \sbmstudy{} dataset.}
    \label{table:sbmstudy}
    \centering
    \begin{tabular}{|p{2.4in}|p{2.4in}|p{1.5in}|}
    \hline
    \textbf{Parameter}                & \textbf{Description}
    & \textbf{Values}                                \\ \hline
    SBM inter-community probability   & Probability scaler in the SBM graph that
    an edge forms across communities, i.e. $W_{ij} : i\neq j$
    & $\{0.005,0.001\}$                              \\ \hline
    SBM intra-community probability   & Probability scaler in the SBM graph that
    an edge forms within a community, i.e. $W_{ii} $
    & $\{0.031, 0.053, 0.073\}$                      \\ \hline
    SIR inter-community infectiousness & Probability scaler in the SIR diffusion
    that an infection occurs from community to the other, i.e. $\mathbf{T}_{ij}
    : i\neq j $ & $\{0.5, 0.3, 0.1\}$                            \\ \hline
    SIR inter-community infectiousness & Probability scaler in the SIR diffusion
    that an infection occurs within a community, i.e. $\mathbf{T}_{ii}$
    & 1                         \\ \hline
    Transmission probability          & Global infection transmission
    probability $\nu$ & $\{0.029, 0.047, 0.071\}$ \\ \hline
    \end{tabular}
    \end{table*}

\subsubsection{\sbmthreecomm{}} is a set of datasets each corresponding to two
scenarios of three communities each. Scenario pairs in this dataset are designed
in such a way that for the pair $(\scn_1, \scn_2)$, cascades from one scenario
are indistinguishable from the other without community membership information.
That is done by guaranteeing that $\pr(\cas \in \scn_1) = \pr(\cas' \in
\scn_2)$, where $\cas'$ is equivalent to $\cas$ but with the first and second
communities' nodes swapping communities. The dataset is defined as follows: all
scenarios have $n=3000$, with three communities of equal sizes. They share the
SBM matrix $\mathbf{W}$ and use the structural transition matrices
$\mathbf{T_1}, \mathbf{T_2}$ given in the main text (Section~\ref{sec:eval:sbm}).
% MOVED to main text (Section~\ref{sec:eval:sbm}); shown here under \showmovedtrue.
This translates to the first two communities having identical
inter-community transition probabilities across both scenarios, while their
inter-community transition probability with the third community is the exact
opposite across scenarios. All scenario pairs in \sbmthreecomm{} have the same
$\alpha$, but differ in $s$, which falls in the range described above.

\subsection{Baselines}
\label{app:eval:models}

\subsubsection{Feature Engineering (\epicurveandstructure{})}
We use the model proposed in~\shortcite{harrison2023identifying} as the
feature-engineering baseline. This model uses aggregated structural features
of the cascade to do a closed-world classification of the scenario.
Specifically, it uses the counts of different types of paths that appear within
each cascade as features. For example, one feature used was the number of mixed
edges, and another was the number of paths of length two where one edge was
nonessential, and one edge was essential. It also made use of boundary
information from the cascade. Specifically, it counts the total number of edges
in the underlying graph $G$ incident to, but not contained within, the cascade
$\cas$. This model also made use of the histogram of outdegrees within the
cascade. By using aggregated structural features, the authors were able to
achieve good performance using a variety of model types (random forests,
logistic regression, and support vector machines). Moving forward, we
will refer to the baseline model as \epicurveandstructure{}.

\subsection{Hardware and Software Description}\label{app:hwsw} The GNNs used in
our work are all implemented in Python using PyTorch Geometric (PyG)
v2.3.1~\shortcite{Fey/Lenssen/2019}. Specifically, we use the
GraphGym~\shortcite{you2020design} library, which is a wrapper on top of PyG that
allows for easy experimentation with GNN architectures. We use the GraphGym
version prepackaged with PyG. We use Python 3.9.13 to run our code. Generating
the SBM graphs is done using NetworkX 3.1~\shortcite{hagberg2008exploring}. Discrete
SIR diffusion processes that were done on SBM graphs were generated using EoN
3.1~\shortcite{Miller2019}. SEIR simulations over the population networks is done
using the EpiHiper epidemic simulator~\shortcite{chen2025epihiper}.

The servers we use to run our experiments use Rocky Linux 8.9 (Green Obsidian)
as their OS. We use multiple different servers and they have between 100GB to
1000GB of memory. Depending on the experiment, we may use 32GB to 80GB capacity
GPUs. Exact library versions can be found in the supplementary material package.

\section{ADDITIONAL RESULTS}
This section presents additional experiments that complement the main results.

\begin{figure}
    \includegraphics[width=\linewidth]{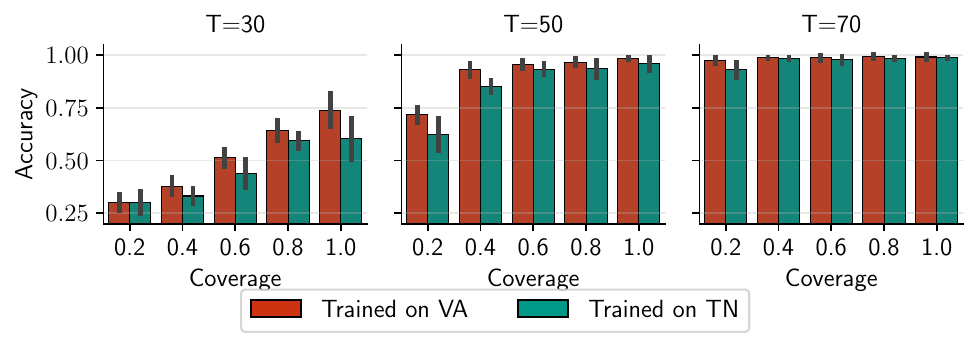}
    \caption{Transfer learning: accuracy on VA for a model trained on VA
    (red) and a model trained on TN with matching time and coverage (green).}
    \label{app:fig:transfer_matching}
\end{figure}

\begin{figure}
    \includegraphics[width=\linewidth]{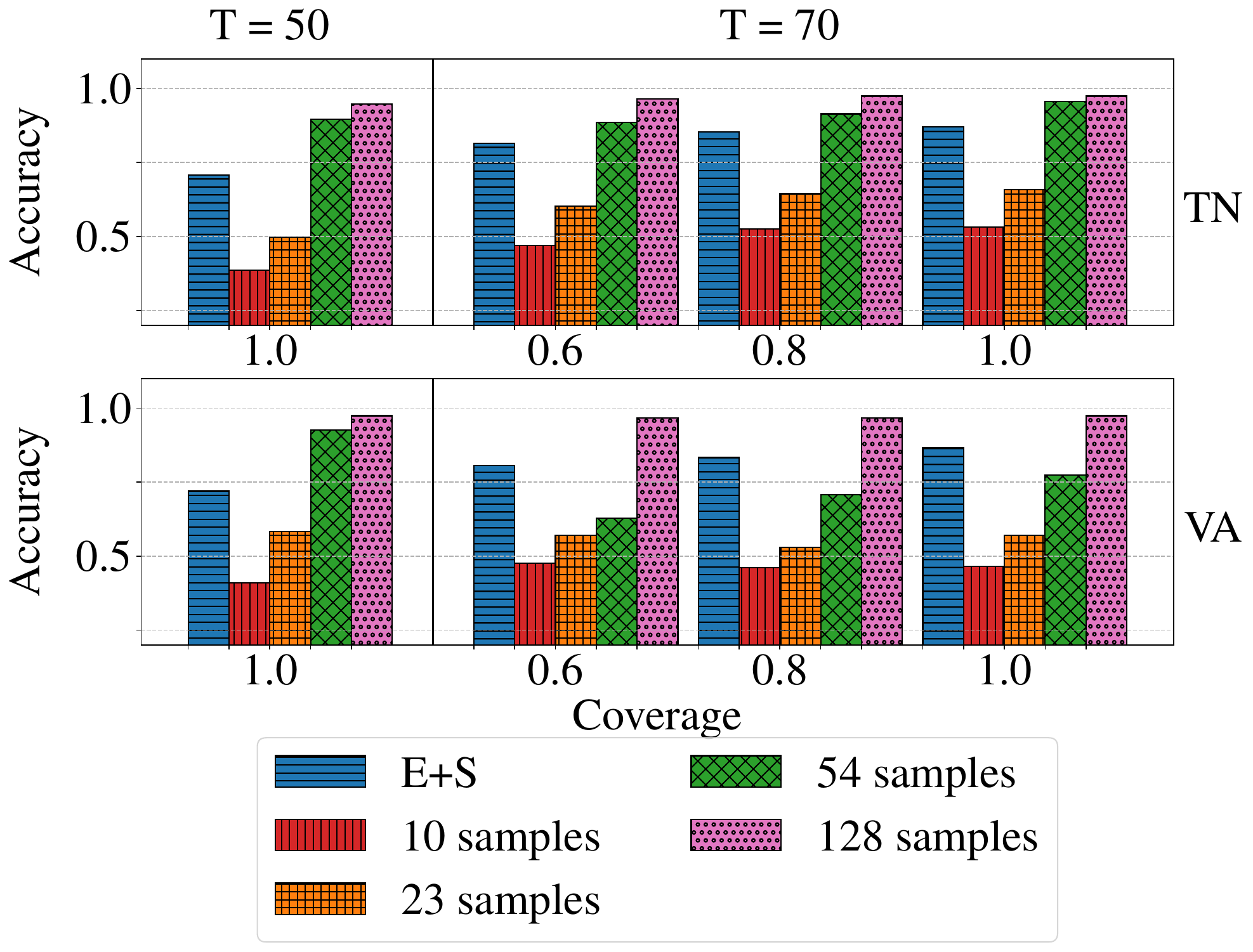}
    \caption{Accuracy of the GNN model compared on TN at different times and
    coverage values using a range of training sample sizes. The performance of
    the GNN is compared with the GNN \epicurveandstructure{} model on the same
    dataset. Note that the \epicurveandstructure{} was trained on 300 samples.}
    \label{app:fig:sample_sens}
\end{figure}

Additionally, we carry out an experiment comparing the performance of
\epicurveandstructure{} with the GNN model using different amounts of training
samples. Results are shown in Figure ~\ref{app:fig:sample_sens}. We find that
the GNN model is able to surpass the \epicurveandstructure{} on TN using less
than  $20\%$ of the number of samples, while on VA at time 70, it requires about
$42\%$ of the number of samples.

\subsection{Transfer Learning}\label{app:eval:transfer}
In addition to the matching-distribution transfer results shown in the main
body (Figure~\ref{app:fig:transfer_matching}), we examine how well the learned
models transfer across mismatched time horizons and coverage levels.

\begin{figure*}
\includegraphics[width=\linewidth]{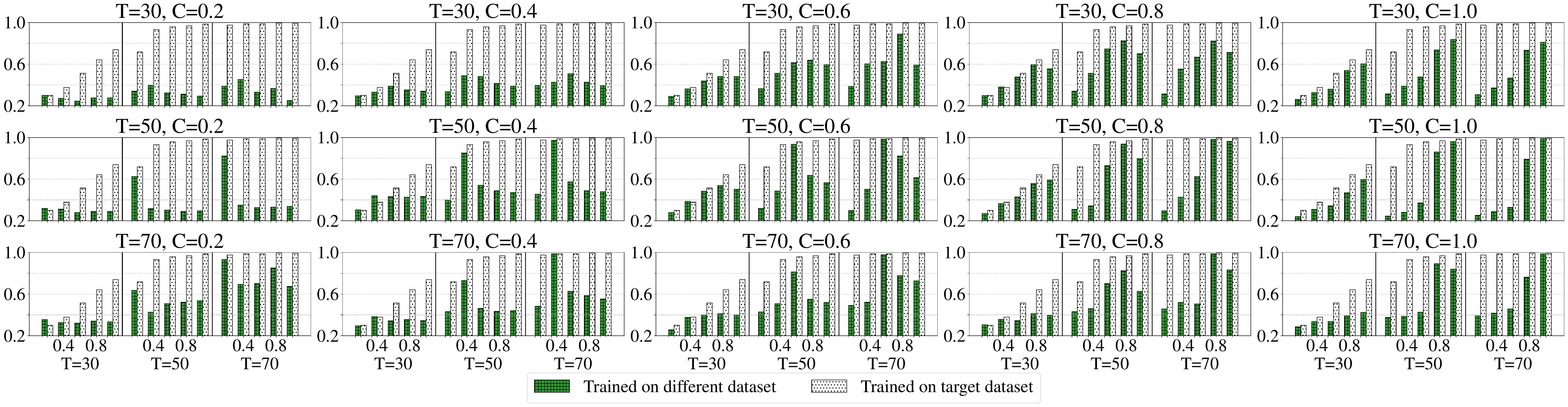}
\caption{Accuracy on all VA datasets when the model is trained on the evaluation
dataset versus when it is trained on a TN dataset with the properties shown in
each subfigure's title. In other words, each subplot shows the accuracies
achieved by a single model trained on TN on all VA datasets. }
\label{app:fig:transfer_nomatching}
\end{figure*}
Next, we wanted to examine how well the learned models transfer across two
additional axis of the data distribution; namely, time and coverage. In other
words, how well can a model trained on certain time and coverage values on
network some generalize to other datasets with looser, or stricter, time or
coverage constraints, and on the other network. This is an additional dimension
of robustness that we envision would be important for scenario identification.
Results for this experiment are shown in
Figure~\ref{app:fig:transfer_nomatching}. In each plot, green bars are
accuracies of a single model trained on TN, on the time and coverage values
mentioned in the title of the corresponding plot. White values are accuracies of
models trained on the target dataset. We find that models perform better when
the distribution of the target dataset is closest to the distribution used to
train them. This can be seen in the accuracies on target datasets that match a
model's source dataset's time and/or coverage, and the pattern in which accuracy
degrades as the difference in these values between source and target datasets
increases.

\begin{figure}
    \includegraphics[width=\linewidth]{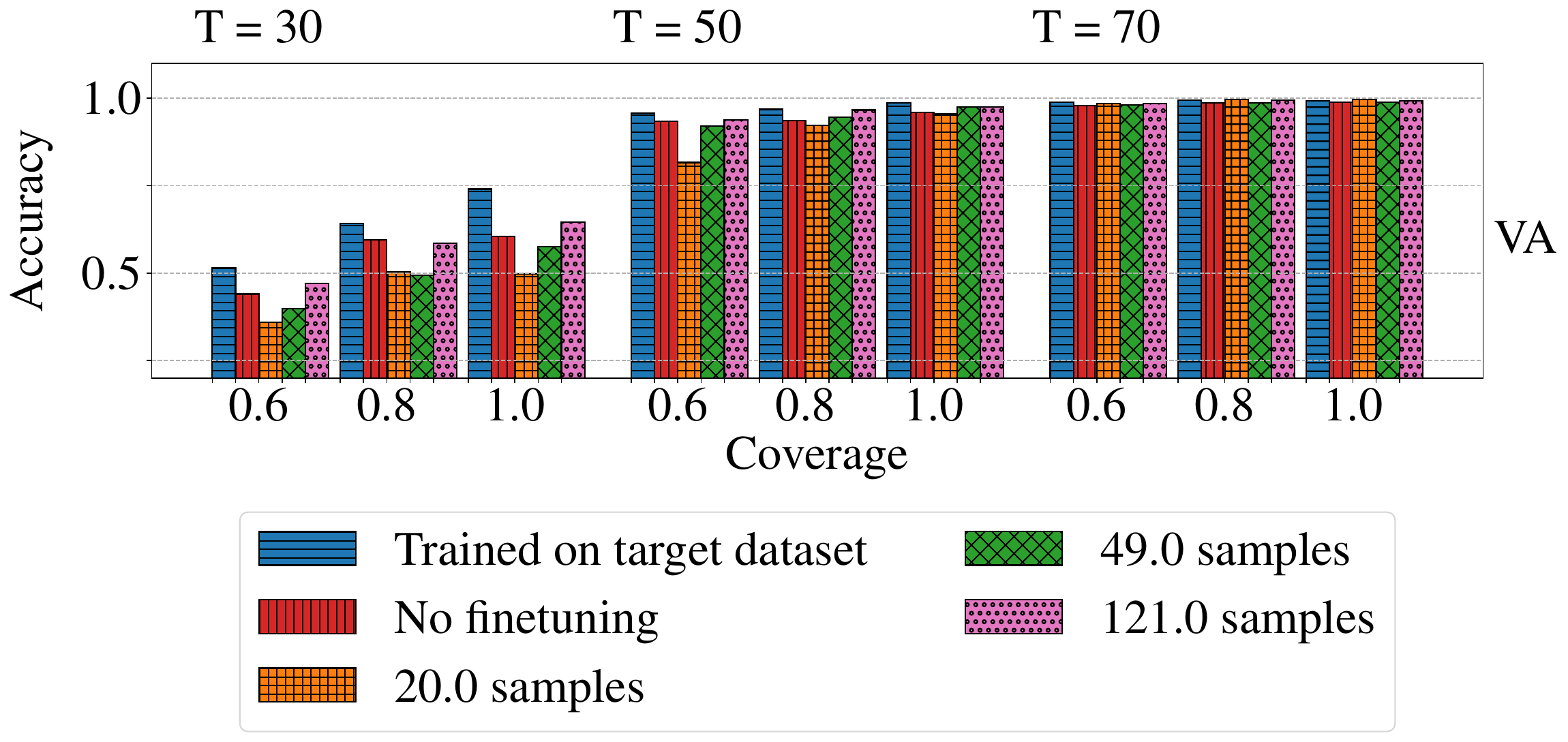}
    \caption{Results from transferring a model trained on TN to VA by
    fine-tuning its classification head $\classhead$ on a sub-sample of examples
    from the target set. Note that models are trained on 300 samples.}
    \label{app:fig:fine_tune_va}
\end{figure}

We run an experiment with fine-tuning the model. We train the model on TN and
fine-tune it on a subsample of VA. Specifically, we fine-tune the
classification head only $\classhead$. Results are shown in
Figure~\ref{app:fig:fine_tune_va}. We observe that, as shown in previous results
(Figure~\ref{app:fig:transfer_matching} and Figure~\ref{app:fig:transfer_nomatching}),
using a model trained on TN directly on VA without fine-tuning performs close to
a model trained on VA itself (first and second columns). Fine-tuning the model
can take up to 121 samples before performance improves over the non-finetuned
version. However, this can be beneficial in terms of computation cost. For
example, a single training epoch of fine-tuning the classification head on 121
samples from the VA dataset at time 50 and with full coverage is $4.9\times$
faster than training the full module $\module$ on all 300 samples, and the
resulting model is only $1\%$ less accurate. 

\subsection{Separability Detection}\label{app:eval:sep} In this experiment
(results shown in the right panel of Figure~\ref{fig:sbm} in the main body), we use the
\sbmthreecomm{} dataset and vary the value of the separability parameter $s$ in
the range $[0.5, 1]$
using the data representation \communityid{}. The value of $s$ controls the
difference in the number of edges between community 3 and the other two
communities across scenarios. When $s=0.5$, the two scenarios have identical
distributions; when $s=1$, one scenario has no edges between community 3 and
community 1, while the other has no edges between community 3 and community 2.

\subsection{Additional Ablation and Robustness Analysis}\label{app:eval:edge}

\begin{figure}[t]
    \includegraphics[width=\linewidth]{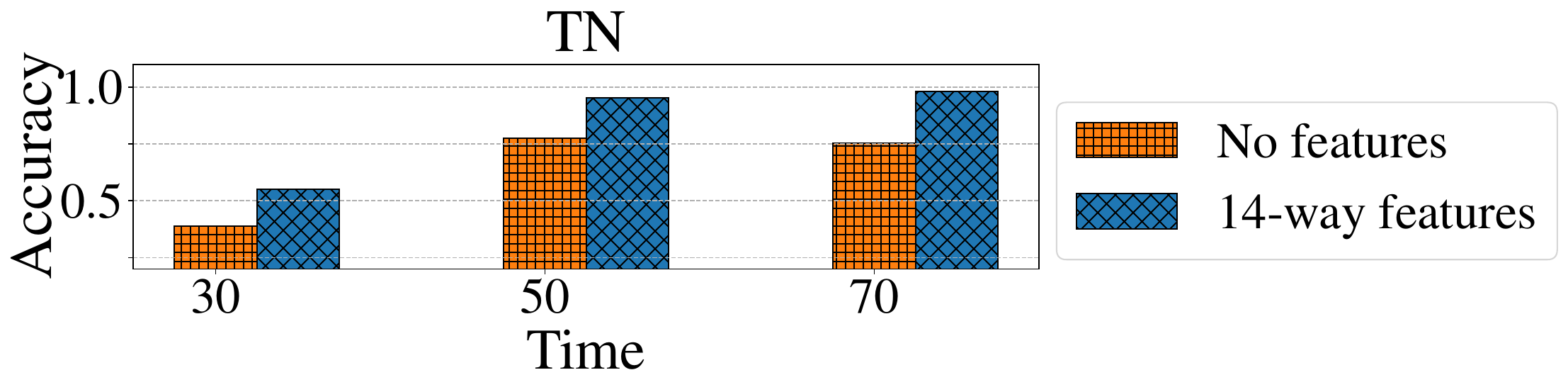}
    \caption{Comparing the accuracy of the GNN model on TN with and without
    using edge features. The GNN with 14-way edge features
    (\texttt{generaledgeconv}) consistently outperforms the same architecture
    without edge features (\texttt{generalconv}).}
    \label{app:fig:noedge_vs_14way}
\end{figure}

Figure~\ref{app:fig:noedge_vs_14way} shows the effect of edge features on TN
across time horizons and coverage levels at full coverage.

To further emphasize the importance of edge features, we run an experiment on
\sbmthreecomm{} where we train a model using the \noedge{} data representation
(where community IDs are not known to the model) using all values of $s$. We
find that, regardless of the value of $s$, the model always behaves like a
random classifier. This confirms that the scenario pairs are designed so that
differences between them are only measurable when community IDs are available.

\begin{figure}
    \includegraphics[width=\linewidth]{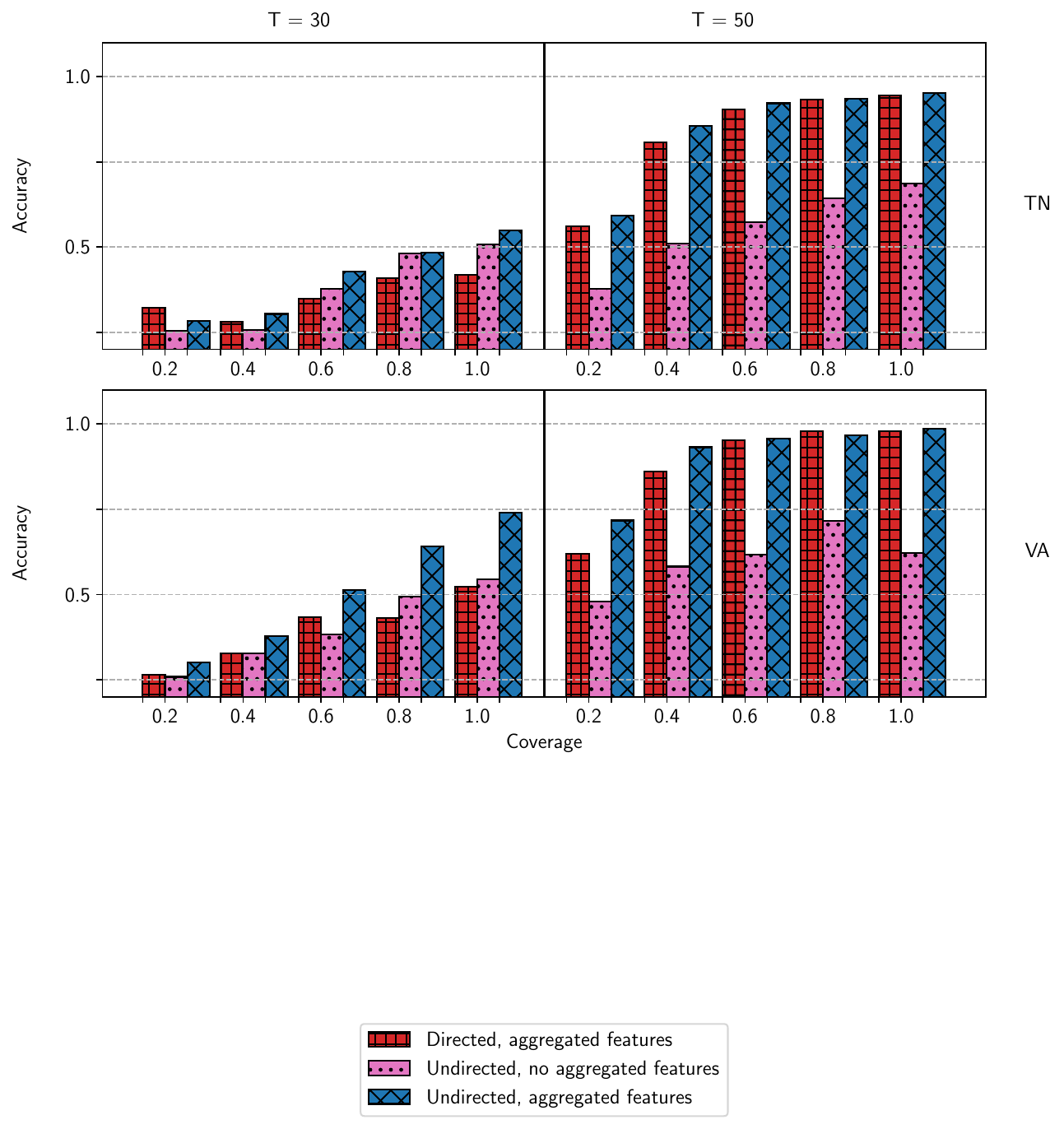}
    \caption{Comparison of GNN accuracy when incorporating boundary edge counts
    and when the graph is undirected.} \label{app:fig:boundary_direction}
\end{figure}
% Please add the following required packages to your document preamble: %
% \usepackage{multirow}
\begin{table}[ht]
  \centering
    \caption{The accuracy of the GNN model when not using aggregated features,
    when using each of the three aggregated features, and when using all three
    features. Experiment was run at time 50 with full coverage, and the cascades
    were directed. } \label{app:table:boundary}
    \begin{tabular}{ll|rr}
        Dataset          &                        & TN   & VA   \\\hline
        Accuracy (\%) & No aggregated features & 63.2 & 54.6 \\
                      & Cascade degree         & 70.4 & 66.2 \\
                      & Graph degree           & 66.0 & 48.4 \\
                      & Boundary degree        & 92.4 & 97.0 \\
                      & All                    & 94.4 & 97.8
        \end{tabular}
    \end{table}
\subsubsection{Degree Counts} 
We measure the benefit of the locally aggregated node features, namely nodes'
graph, cascade, and boundary degrees, on the model performance.
Figure~\ref{app:fig:boundary_direction} and the right subfigure of
Figure~\ref{fig:sbm} demonstrate the performance gain from these features on
\popdata{} and \sbmstudy{}, respectively. On average, boundary degree improves
\popdata{} accuracy by 19\% and \sbmstudy{} accuracy by 42\%. We carry out
an additional ablation study over the three features to understand their effect
more finely. Table~\ref{app:table:boundary} shows results on a subset of
\popdata{} at $T=50$ with full coverage. We find that the majority of the
performance gain can be attributed to the boundary degree. On this subset it
improves the model's accuracy by an average of 35.8\%, and adding the other two
features contributes only a further 1.4\%. We attribute this to the fact that boundary
degrees combine information from the cascade as well as from the contact
network, revealing important information about the cascade relative to its
underlying network. We note that these results follow what we had already shown
theoretically at the end of Section~\ref{sec:learnability} in terms of boundary
degree utility.

\subsubsection{Non-directionality}\label{app:eval:direction} Cascades are
directed graphs; infections flow in one direction. However,
Figure~\ref{app:fig:boundary_direction} shows the accuracies achieved when
treating the graph as a directed graph (red) versus treating it as an undirected
graph (blue). We find that, consistently, using a directed representation yields
better results. On average, we find a 6\% increase in accuracy achieved by
making the graph undirected. We attribute this to the better flow of information
through the cascade unlike in the directed representation in which nodes only
get information from their ancestors, and disseminate information to their
descendants.  
\begin{figure}
    \includegraphics[width=\linewidth]{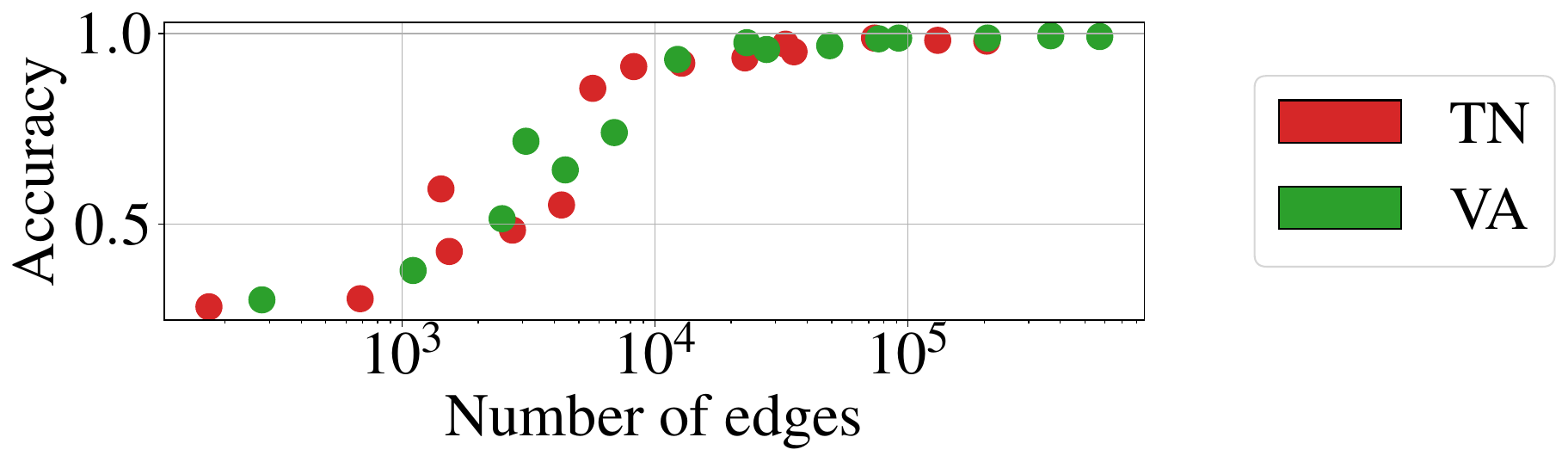}
    \caption{Plotting the accuracy of the GNN model versus the mean number of
    edges of the cascade dataset being classified. Specifically, all the
    datasets that are classified in Figure~\ref{fig:gnn_vs_es} are plotted.}
    \label{app:fig:size}
\end{figure}
\subsubsection{Large Cascade Sizes} Naturally, larger cascades contain more
information than smaller cascades, and so more can be learned (and inferred)
from larger cascades. To showcase the effect of cascade size on achievable
accuracy, we model all the accuracies over all the datasets shown in
Figure~\ref{app:fig:size}, but with the $x$-axis showing the mean size of the
cascades in that dataset. We see a clear correlation between the two.

\fi

% Reducing font size for References & Author Biographies
\footnotesize

\bibliographystyle{wsc}
\bibliography{material/references}

\section*{AUTHOR BIOGRAPHIES}

\noindent {\bf \MakeUppercase{Amro Alabsi Aljundi}} is a Ph.D. candidate at the
Biocomplexity Institute, University of Virginia. His research focuses on machine
learning and networks. His e-mail address is \email{nmm2uy@virginia.edu}.\\

\noindent {\bf \MakeUppercase{Galen Harrison}} is a researcher at the Biocomplexity Institute, University of Virginia. His research interests include machine learning for epidemic simulation analysis and scenario identification. His e-mail address is \email{gh7vp@virginia.edu}.\\

\noindent {\bf \MakeUppercase{Jiangzhuo Chen}} is a Research Professor at the
Biocomplexity Institute, University of Virginia. He is a lead developer of
EpiHiper, an epidemic simulator used in national-scale public health studies. His e-mail address is \email{chenj@virginia.edu}.\\

\noindent {\bf \MakeUppercase{Abhijin Adiga}} is a Research Associate Professor at the Biocomplexity Institute, University of Virginia. His research interests include network science, combinatorial optimization, and computational epidemiology. His e-mail address is \email{abhijin@virginia.edu}.\\

\noindent {\bf \MakeUppercase{Anil Kumar Vullikanti}} is a Professor at the Biocomplexity Institute, University of Virginia. His research spans algorithm design, network science, and machine learning with applications in public health. His e-mail address is \email{vsakuma@virginia.edu}.\\

\noindent {\bf \MakeUppercase{Madhav V. Marathe}} is a Distinguished Professor and Director of the Biocomplexity Institute at the University of Virginia. His research interests include simulation and modeling of large socio-technical systems, computational epidemiology, and network science. His e-mail address is \email{marathe@virginia.edu}.\\

\end{document}